\newlength{\btlabelwidth}\setlength{\btlabelwidth}{\labelwidth}
\newlength{\btleftmargin}\setlength{\btleftmargin}{\leftmargin}
\newenvironment{btlists}{\begin{list}{{\rm--}}{%
\setlength{\labelwidth}{\btlabelwidth}\setlength{\leftmargin}{\btleftmargin}%
\setlength{\topsep}{0pt plus0.2ex}%
\setlength{\itemsep}{0ex plus0.2ex}%
\setlength{\parsep}{0pt plus0.2ex}}}{\end{list}}
\newcounter{btlistklac}
\newtheorem{theorem}{Theorem}[section]
\newtheorem{corollary}[theorem]{Corollary}
\newenvironment{proof}{{\it Proof. }}{\hspace*{\fill} $\Box$\smallskip }
\def\N{\mathbb{N}}
\def\cL{{\cal L}}
\def\cP{{\cal P}}
\def\Prod{\mathrm{Prod}}
\def\Symb{\mathrm{Symb}}
\def\klPDOL{k\mathrm{{\ell}PD0L}}
\def\klPDTOL{k\mathrm{{\ell}PDT0L}}
\def\klPOL{k\mathrm{{\ell}P0L}}
\def\klPTOL{k\mathrm{{\ell}PT0L}}
\def\klTOL{k\mathrm{{\ell}T0L}}
\def\elPDOL{1\mathrm{{\ell}PD0L}}
\def\elPDTOL{1\mathrm{{\ell}PDT0L}}
\def\elPOL{1\mathrm{{\ell}P0L}}
\def\elPTOL{1\mathrm{{\ell}PT0L}}
\def\lPDOL{\mathrm{{\ell}PD0L}}
\def\lPDTOL{\mathrm{{\ell}PDT0L}}
\def\lPOL{\mathrm{{\ell}P0L}}
\def\lPTOL{\mathrm{{\ell}PT0L}}
\def\lTOL{\mathrm{{\ell}T0L}}
\def\PDOL{\mathrm{PD0L}}
\def\PDTOL{\mathrm{PDT0L}}
\def\POL{\mathrm{P0L}}
\def\PTOL{\mathrm{PT0L}}
\def\Lra{\Longrightarrow}
\def\ra{\rightarrow}
\def\topp#1#2{\genfrac{}{}{0pt}{1}{#1}{#2}}
\def\topps#1#2{\genfrac{}{}{0pt}{2}{#1}{#2}}
\def\Set#1#2{\left\{\: #1\;|\; #2\:\right\}}
\def\Setr#1#2{\left\{\: #1\;\left|\; #2\right.\:\right\}}
\def\Sets#1{\left\{\,#1\,\right\}}
\def\mand{\mbox{ and }}
\def\mfor{\mbox{ for }}
\def\qmand{\quad \mbox{and} \quad}
\DeclareSymbolFont{letters}{OML}{cmm}{m}{it}
\DeclareSymbolFont{symbols}{OMS}{cmsy}{m}{n}
\title{On the Descriptional Complexity of Limited Propagating Lindenmayer Systems}
\author{Bianca Truthe
\institute{Otto-von-Guericke-Universit{\"a}t Magdeburg, Fakult{\"a}t f{\"u}r Informatik\\
PSF 4120, D-39016 Magdeburg, Germany}
\email{truthe@iws.cs.uni-magdeburg.de}
}
\begin{document}
\maketitle

\begin{abstract}
    We investigate the descriptional complexity of limited propagating Lindenmayer 
    systems and their deterministic and tabled variants
    with respect to the number of rules and the number
    of symbols. We determine the decrease of complexity when the generative
    capacity is increased. For incomparable families, we give 
    languages that can be described more efficiently in either of these families
    than in the other.
\end{abstract}

\section{Introduction}

Several generating devices for formal languages have been studied in the literature
with respect to the size of their descriptions (e.\,g., \cite{Gru76}). For sequentially
deriving grammars, the measures number of productions, number of nonterminal symbols,
and number of all symbols have been investigated.

In 1968, Lindenmayer systems (L-systems) have been introduced~(\cite{Lin68}).
In order to model the development of organisms, these devices work in parallel (in one
derivation step, not only one symbol is rewritten as in a sequential grammar but all
symbols are rewritten). For L-systems, the number of tables, the number of active symbols,
and the degree of nondeterminism have been studied as measures of complexity. In \cite{Das04},
the measures number of rules and number of symbols were introduced for L-systems.

Twenty years after the introduction of L-systems, a restricted variant of L-systems
with a partially parallel derivation process has been proposed in \cite{Wae88}. In
these so-called $k$-limited L-systems, only $k$ occurrences of each symbol are replaced
according to some rule. First results on the descriptional complexity of~$k$-limited 
L-systems can be found in \cite{Har09}.

We continue this work and study the relations that were left open in \cite{Har09}
or that have not been optimal yet. In this paper, we confine ourselves to propagating
limited systems.

\section{Definitions}

We assume that the reader is familiar with the basic concepts of formal language theory
(see e.\,g. \cite{handbook}). We recall here some notations used in the paper.

We denote the set of all positive integers by $\N$ and the set of all non-negative integers by $\N_0$.

For an alphabet $V$ (a finite set of symbols), we denote by $V^*$ the set of all words 
over $V$, by $V^+$ the set of all non-empty words over $V$, and by $V^n$ for a natural
number $n\in\N_0$ the set of all words which have the length $n$. We denote the empty 
word by $\lambda$, the length of a word $w$ by $|w|$, and the
number of occurrences of a letter $a$ in a word $w$ by $|w|_a$.
Furthermore, we denote the cardinality of a set $A$ by $|A|$. 

Two sets $X$ and $Y$ are called incomparable, if neither $X\subseteq Y$ nor $Y\subseteq X$ holds.
They are called disjoint if the intersection is empty.

A tabled interactionless Lindenmayer system (L-system for short), abbreviated as $\mathrm{T0L}$ system,
is a triple $G=(V,\cP,\omega)$ where $V$ is an alphabet, $\omega\in V^+$ is called the axiom, and $\cP$ is a 
finite, non-empty set $\Sets{P_1,P_2,\ldots, P_n}$ where $P_i$ (called a table), 
for $1\leq i\leq n$, is a finite subset of $V\times V^*$ such that there is at least one 
element $(a,w)\in P_i$ for each letter $a\in V$. The elements $(a,w)$ in some table are
called productions or rules and are written as $a\ra w$.

A $\mathrm{T0L}$ system $G=(V,\cP,\omega)$ is called an $\mathrm{0L}$ system if $\cP$ contains
only one table. It is called a $\mathrm{DT0L}$ system if every table $\cP$ contains only one rule
for each letter in $V$ and it is called a $\mathrm{D0L}$ system if $\cP$ contains only one table
and the table consists of only one rule for each letter in $V$.

Such an L-system is called propagating, if there is no erasing rule $a\ra\lambda$ in the system
(all rules have the form $a\ra w$ with $a\in V$ and $w\in V^+$).

A word $v\in V^+$ directly derives a word $w\in V^*$ by a system $G$, written as $v\xLongrightarrow[G]{} w$ (we
omit the index if it is clear from the context), if $v=x_1x_2\cdots x_m$ with $m\in\N$, $x_i\in V$
for $1\leq i\leq m$ and $w=y_1y_2\cdots y_m$ with $y_i\in V^*$ for $1\leq i\leq m$ such that
the system $G$ contains a table $P$ which contains all the rules $x_i\ra y_i$ for $1\leq i\leq m$.
Hence, in parallel, every letter of a word is replaced by a word according to the rules of a table.
By $\xLongrightarrow{*}$, we denote the reflexive and transitive closure of $\Lra$. The language generated
by a system $G$ is defined as 
\[L(G)=\Setr{z}{\omega\xLongrightarrow[G]{*} z}.\]

In \cite{Wae88}, a limitation of the parallel rewriting was introduced. For a natural number $k\in\N$,
a $k$-limited $\mathrm{T0L}$ system (shortly written as $\klTOL$ system) is a quadruple $G=(V,\cP,\omega,k)$
where $(V,\cP,\omega)$ is a $\mathrm{T0L}$ system. In a $k$-limited system, exactly $\min\{k,|w|_a\}$
occurrences of any letter $a$ in the word $w$ under consideration are rewritten in a derivation step
(hence, the number of occurrences of a letter that are replaced in each step is limited by $k$).

We only say a $\mathrm{T0L}$ system is limited (shortly written as $\lTOL$ system) if it is 
a $k$-limited system for some number $k\in\N$.

The class of all $k$-limited $\mathrm{T0L}$ systems is written as $\klTOL$. The restricted and 
propagating variants thereof are denoted by $\klPDOL$, $\klPOL$, $\klPDTOL$, $\klPTOL$, and 
without $k$ if the limit is arbitrary. For a class $X$ of L-systems, we write $\cL(X)$ for
the family of languages that is generated by an L-system from~$X$.

As measures of descriptional complexity, we consider the number of rules and the number of symbols.
For an L-system $G$ over an alphabet $V$ with tables $P_1,P_2,\ldots,P_n$ with $n\in\N$ and an
axiom $\omega$, we set
\[\Prod(G)=\sum_{i=1}^n|P_i| \qmand \Symb(G)=|\omega|+\sum_{i=1}^n\sum_{a\ra w\in P_i}(|w|+2).\]
Let $X$ be a class of L-systems. For a language $L\in\cL(X)$, we set
\begin{align*}
\Prod_X(L) &= \min\Set{\Prod(G)}{G\in X \text{ with } L(G)=L} \mand\\
\Symb_X(L) &= \min\Set{\Symb(G)}{G\in X \text{ with } L(G)=L}.
\end{align*}
Hence, the complexity of a language $L$ with respect to a class $X$ of L-systems is the complexity 
of a smallest L-system $G\in X$ that generates the language $L$. If we extend a class $X$ to a 
class $Y$ then the complexity can only become smaller: If $X\subseteq Y$, then $K_X(L)\geq K_Y(L)$
for any language $L\in\cL(X)$ and complexity measure $K\in\Sets{\Prod,\Symb}$.

We now define the complexity relations considered in this paper. Let $X$ and~$Y$ be two classes
of L-systems such that the language families $\cL(X)$ and $\cL(Y)$ are not disjoint and let
$K\in\Sets{\Prod,\Symb}$ be a complexity measure. 

\noindent We write
\begin{itemize} 
\item $X=^K Y$ if $K_X(L)=K_Y(L)$ holds for any language $L\in\cL(X)\cap\cL(Y)$ (the complexities
      are equal),
\item $X>^K Y$ if there is a sequence of languages $L_m\in\cL(X)\cap\cL(Y)$ for $m\in\N$, such
      that\linebreak
      $K_X(L_m)-K_Y(L_m)\geq c\cdot m$ for a constant $c\in\N$ (the difference of the 
      complexities can be arbitrarily large),
\item $X\gg^K Y$ if there is a sequence of languages $L_m\in\cL(X)\cap\cL(Y)$ for \hbox{$m\in\N$}, such
      that\linebreak
      $\lim\limits_{m\to\infty}\frac{K_Y(L_m)}{K_X(L_m)}=0$ (asymptotically, the complexity 
      using $X$ grows faster than using $Y$),
\item $X\ggg^K Y$ if there is a sequence of languages $L_m\in\cL(X)\cap\cL(Y)$, $m\in\N$, and
      a constant $c\in\N$ such that $K_Y(L_m)\leq c$ and $K_X(L_m)\geq m$.
\end{itemize}
From these definitions, we obtain that $X\ggg^K Y$ implies $X\gg^K Y$ and that also $X\gg^K Y$ implies $X>^K Y$
for $K\in\Sets{\Prod,\Symb}$.

For each natural number $c$, there are only finitely many L-systems $G$ (upto renaming the 
symbols) for which $\Symb(G)\leq c$ holds. Hence, there is no class $X$ of L-systems that 
generates infinitely many languages $L_n$ with $\Symb_X(L_n)\leq c$. Thus, there exist no two
classes $X$ and $Y$ with the relation $Y\ggg^\Symb X$.

In all cases throughout this paper, we obtain the relation $X\gg^{\Symb} Y$ whenever we also obtain
$X\ggg^{\Prod} Y$. Then, we also shortly write $X\ggg Y$.
Further, if two classes $X$ and $Y$ are in the same relation $\rhd$ with respect to both measures
$\Prod$ and $\Symb$, hence, $X \rhd^{\Prod} Y$ and $X \rhd^{\Symb} Y$ for a symbol
$\rhd\in\{\gg,>,=\}$, then we write $X \rhd Y$.

\section{On 1-limited systems}

Regarding 1-limited propagating L-systems, the following hierachy is known (\cite{Har09}).

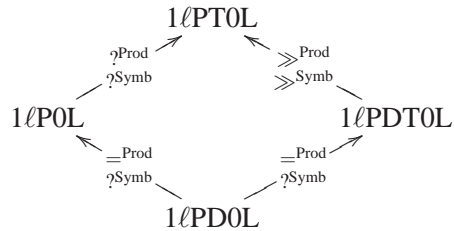
\begin{figure}[h]
\centerline{$\scalebox{1}{\xymatrix{
& {\elPTOL} &\\
{\elPOL}\ar[ru]|-{\topp{?^{\Prod}}{?^{\Symb}}} & & {\elPDTOL}\ar[lu]|-{\topp{\gg^{\Prod}}{\gg^{\Symb}}}\\
& {\elPDOL}\ar[lu]|-{\topp{=^{\Prod}}{?^{\Symb}}}\ar[ru]|-{\topp{=^{\Prod}}{?^{\Symb}}} &
}}$}
\caption{Relations for 1-limited systems}\label{fig-1-old}
\end{figure}

An arrow from a class $X$ to a class $Y$ with a label $R$ is to be read as the relation $X R Y$. If the label 
contains a question mark, then the relation was not given in \cite{Har09}.
In this section, we prove relations for all these cases and also relations between
the classes $\elPOL$ and $\elPDTOL$.

\begin{theorem}\label{th-1PD0L-1PDT0L}
The relation $\elPDOL=^\Symb \elPDTOL$ holds.
\end{theorem}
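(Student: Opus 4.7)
Since $\elPDOL\subseteq\elPDTOL$, every $\elPDOL$ system is also a $\elPDTOL$ system (with a single table), so
$\Symb_{\elPDTOL}(L)\le\Symb_{\elPDOL}(L)$ holds for every $L\in\cL(\elPDOL)$. The substance of the theorem is therefore the converse inequality: given any $\elPDTOL$ system $G=(V,\{P_1,\ldots,P_n\},\omega,1)$ with $L(G)=L\in\cL(\elPDOL)$, I need to exhibit a $\elPDOL$ system $G'$ with $L(G')=L$ and $\Symb(G')\le\Symb(G)$.

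My plan is to use the tagging idea that underlies the simulation in \cite{Har09} (which was designed there to establish $\elPDOL=^{\Prod}\elPDTOL$) and to observe that it can be arranged so as to preserve the $\Symb$ measure as well. Starting from $G$ as above I would take $V'=V\times\{1,\ldots,n\}$ and construct $G'=(V',\{P'\},\omega',1)$ so that (i) the axiom $\omega'$ is obtained from $\omega$ by attaching exactly one tag to each letter, so $|\omega'|=|\omega|$; (ii) for every rule $a\to w_{i,a}$ of $P_i$ there is exactly one rule $(a,i)\to w'_{i,a}$ of $P'$, where $w'_{i,a}$ is obtained from $w_{i,a}$ by a letter-by-letter tagging, so that $|w'_{i,a}|=|w_{i,a}|$; and (iii) the specific choice of tags is made exactly as in \cite{Har09} to guarantee $L(G')=L$. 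Given (i) and (ii), the calculation
\[
\Symb(G')=|\omega'|+\sum_{(a,i)\in V'}\bigl(|w'_{i,a}|+2\bigr)=|\omega|+\sum_{i=1}^{n}\sum_{a\in V}\bigl(|w_{i,a}|+2\bigr)=\Symb(G)
\]
is immediate, and taking $G$ to be a minimum $\elPDTOL$ for $L$ yields $\Symb_{\elPDOL}(L)\le\Symb(G')=\Symb_{\elPDTOL}(L)$, which together with the trivial direction gives the desired equality.

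The step I expect to be the main obstacle is not the arithmetic but the verification that the tagging scheme of \cite{Har09}, which was tuned to preserve the number of productions, can actually be implemented in this \emph{length-preserving} way: each letter of the axiom must receive a single tag, and each letter occurring on the right-hand side of a rule has to be replaced by a single tagged symbol rather than by a short word that encodes the tag separately. Once this bookkeeping has been checked, the calculation above goes through verbatim and no new simulation argument beyond the one already available in \cite{Har09} is needed, so the theorem follows at once from the two inequalities.
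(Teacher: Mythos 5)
Your reduction of the theorem to the inequality $\Symb_{\elPDOL}(L)\le\Symb_{\elPDTOL}(L)$ is correct, but the way you propose to prove it --- a language-preserving, symbol-preserving translation of an arbitrary $\elPDTOL$ system into a $\elPDOL$ system via tagging --- is not a gap that ``bookkeeping'' will close; it is the wrong mechanism. First, such a generic simulation cannot exist: it would show $\cL(\elPDTOL)\subseteq\cL(\elPDOL)$, whereas the families differ (your construction nowhere uses the hypothesis $L\in\cL(\elPDOL)$, so if it worked it would work for every $\elPDTOL$ system). Second, the tagging breaks precisely because of the $1$-limitation: in $G'$ the symbols $(a,i)$ and $(a,j)$ with $i\ne j$ are \emph{distinct} letters, so in one step of $G'$ one occurrence of \emph{each} of them is rewritten, while in $G$ only one occurrence of $a$ in total is rewritten per step; moreover nothing forces all letters of the current word to carry the same tag, so a single $G'$-step can mix rules from different tables, which $G$ cannot do. There is no table-selection mechanism available to a single deterministic table, and you have deferred exactly this point (``once this bookkeeping has been checked'') --- but it is the whole content of the claim, and it fails.

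The paper argues in the opposite direction: it never converts $H$ into a D0L system, but proves a \emph{lower bound} on $\Symb(H)$ for every $\elPDTOL$ system $H$ generating $L$, using the structure that $L$ inherits from the assumed minimal $\elPDOL$ system $G$. Because $G$ is deterministic, all words reachable from $\omega$ in $p$ steps have the same length $l_p$ and the same set of occurring letters $\alpha_p$; any one-step derivation $w_1\Rightarrow_H w_2$ must therefore lengthen $w_1$ by at least $\sum_{a_i\in\alpha_p}(|w_{a_i}|-1)$, which forces every table $P_j$ of $H$ (which must contain a rule for every letter of $V$, and is applied to words from every level, so the relevant letter set is all of $V$) to satisfy $|P_j|\ge\sum_{i=1}^n(|w_{a_i}|+2)=\Symb(G)-|\omega|$. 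Hence $\Symb(H)\ge|\omega_H|+|P_1|\ge\Symb(G)$, and with the trivial inclusion this gives equality. You should replace your construction by a counting argument of this kind; as written, the proposal has a genuine gap at its central step.
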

\begin{proof}
Let $G=(V,\{P\},\omega,1)$ be a $\elPDOL$ system which is minimal with respect to the number of symbols
for the language $L=L(G)$ with $V=\{a_1,a_2,\dots, a_n\}$ and $P=\Set{a_i\to w_{a_i} }{ 1\leq i\leq n}$.
Further, let $H=(V,\{P_1,P_2,\ldots P_m\},\omega_H,1)$ be a minimal $\elPDTOL$ system for the
language $L$.

Let $\Omega_n$ be the set of all words that are derived by $G$ in $n$ steps from the axiom:
\[\Omega_0 = \{\omega\},\
\Omega_n = \Set{ w }{ \omega\Lra^n_G w } = \Set{ w }{ \mbox{there is $u\in\Omega_{n-1}$ with $u\Lra_G w$}}.
\]

For all $n\geq 0$, we have:
\begin{btlists}
\item The set $\Omega_n$ is not empty.
\item All words in $\Omega_n$ contain the same number of letters for each letter of~$V$
(during the derivation, the same rules are applied -- only at different positions). 
As a consequence, all words in $\Omega_n$ have the same length. Let it be denoted by $l_n$. 
The set of all occurring letters is denoted by $\alpha_n$.
\end{btlists}

Since $G$ is propagating, we have $l_0\leq l_1\leq l_2\leq\cdots\leq l_i\leq\cdots$

For each word $w_1$, from which a word $w_2$ is derived by $H$ in one step, there are
words~$w_3$ and $w_4$ such that the following holds:
\begin{btlists}
\item $w_1$ and $w_3$ belong to the same set $\Omega_p$ for a number $p\geq 0$,
\item $w_3\Lra_G w_4$ and
\item $|w_4|\leq |w_2|$.
\end{btlists}

This implies
\[|w_4|=|w_3|+\sum\limits_{\topps{i=1}{|w_3|_{a_i}>0}}^n(|w_{a_i}|-1)\]
(each letter $a_i$ appearing in $w_3$ is replaced once by the corresponding 
word $w_{a_i}$; hence, $|w_{a_i}|-1$ letters are added).
Since $|w_2|\geq |w_4|$, we also have 
\[|w_2|\geq |w_3|+\sum_{\topps{i=1}{|w_3|_{a_i}>0}}^n(|w_{a_i}|-1).\]

Since $w_3$ and $w_1$ belong to the same set $\Omega_p$, we have $|w_3|=|w_1|=l_p$ and
the words $w_3$ and $w_1$ consist of the same letters (the set of the appearing letters
is~$\alpha_p$). 

Hence,
\[|w_2| \geq |w_1|+s \mbox{ with }
  s = \sum_{\topps{i=1}{a_i\in\alpha_p}}^n(|w_{a_i}|-1).
\]

Let $P_j$ be that table by which $w_2$ is derived from $w_1$ in $H$. Then we have, for the
number $|P_j|$ of the symbols occurring in $P_j$,
\[|P_j|\geq s+\sum_{i=1}^n 3\]
(for each letter $a_i\in V$, there is a rule with at least three symbols; furthermore, the $s$ new
letters (the difference between $w_2$ and $w_1$) have to be generated and each rule is used at 
most once).
In other words, we have
\[|P_j|\geq \sum_{\topps{i=1}{a_i\in\alpha_k}}^n(|w_{a_i}|+2)+\sum_{\topps{i=1}{a_i\notin\alpha_k}}^n 3.\]
If $P_j$ is applied to words from $\Omega_q$ and $\Omega_r$ for any $q\geq 0$ and $r\geq 0$, then this inequality holds 
for $p=q$ as well as for $p=r$. Thus,
\[|P_j|\geq \sum_{\topps{i=1}{a_i\in\alpha_q\cup\alpha_r}}^n(|w_{a_i}|+2)+\sum_{\topps{i=1}{a_i\notin\alpha_q\cup\alpha_r}}^n 3.\]
Let $A$ be the union of the sets $\alpha_p$ for those $p\geq 0$ for which $P_j$ is applied to a word
from $\Omega_p$. Then
\[|P_j|\geq \sum_{\topps{i=1}{a_i\in A}}^n(|w_{a_i}|+2)+\sum_{\topps{i=1}{a_i\notin A}}^n 3.\]
Since $P_j$ is applied to every word, we obtain
$A=\bigcup\limits_{p\geq 0}\alpha_p =V$.
Thus,
\[|P_j|\geq \sum_{i=1}^n(|w_{a_i}|+2)=\Symb(G)-|\omega|.\]
Together, this yields
\[\Symb(H)=|\omega_H|+\sum_{j=1}^m |P_j|\geq |\omega_H|+|P_1|\geq|\omega_H|+\Symb(G)-|\omega|=\Symb(G).\]

Since each $\elPDOL$ system is also a $\elPDTOL$ system, we have $\Symb(H)\leq \Symb(G)$ on the other hand. This yields the claim.
\end{proof}

The proof of the previous theorem can be changed such that $H$ is a $\elPOL$ system (and $P_j$ is the table of the system). 
Then we obtain the following result.

\begin{corollary}\label{cor-1PD0L-1P0L}
The relation $\elPDOL=^\Symb \elPOL$ holds.
\end{corollary}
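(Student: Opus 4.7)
The plan is to re-run the proof of Theorem~\ref{th-1PD0L-1PDT0L} almost verbatim, only replacing the comparison $\elPDTOL$ system by a minimal $\elPOL$ system $H=(V,\{P\},\omega_H,1)$ for $L$ and casting the single table $P$ in the role of the table $P_j$. The sets $\Omega_n$, $\alpha_n$ and the lengths $l_n$ are still defined from $G$ exactly as before.

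First I would re-verify the derivation-based part: for any step $w_1\Lra_H w_2$ with $w_1\in\Omega_p$, the argument producing $w_3\in\Omega_p$ and $w_4$ with $w_3\Lra_G w_4$ and $|w_4|\leq|w_2|$, and hence the estimate $|w_2|\geq|w_1|+\sum_{a_i\in\alpha_p}(|w_{a_i}|-1)$, uses only that $L(G)=L(H)$ and that both systems are $1$-limited and propagating; nothing in it hinges on determinism of $H$, so this portion carries over unchanged. Then I would derive the bound $|P|\geq\sum_{a_i\in\alpha_p}(|w_{a_i}|+2)+\sum_{a_i\notin\alpha_p}3$ by the same counting as in the theorem: the rule in $P$ applied to each $a_i\in\alpha_p$ in that step contributes $|u|+2$ symbols to $|P|$, these right-hand sides satisfy a sum bound dominating that of $G$'s rules, and each letter of $V$ still requires at least one rule contributing at least three symbols. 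Because $P$ is the only table, it is applied in every derivation step and therefore to words from every $\Omega_p$, so the set $A=\bigcup_p\alpha_p$ from the theorem proof equals $V$. This yields $|P|\geq\Symb(G)-|\omega|$ and hence $\Symb(H)\geq\Symb(G)$, using the same axiom-length comparison as in the theorem. The reverse inequality $\Symb(H)\leq\Symb(G)$ is immediate from $\elPDOL\subseteq\elPOL$.

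The step I expect to be the main obstacle is confirming that the lower bound on $|P|$ is not undermined by the fact that a $\elPOL$ table may carry several rules for the same letter, unlike the $\elPDTOL$ tables of Theorem~\ref{th-1PD0L-1PDT0L}. I would handle this by observing that any extra rules only add to the total symbol count of $P$, so the counting inequality remains a valid lower bound; in fact the present argument is slightly cleaner than the theorem's because the union-over-tables step collapses into a single table.
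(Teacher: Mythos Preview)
Your proposal is correct and mirrors exactly what the paper does: it explicitly says that the proof of Theorem~\ref{th-1PD0L-1PDT0L} carries over by letting $H$ be a $\elPOL$ system and taking its single table in the role of~$P_j$. Your added remarks about nondeterminism (extra rules only increase $|P|$) and about the union-over-$\alpha_p$ step collapsing when there is just one table are accurate refinements of that same argument.
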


Next, we prove relations between $\elPOL$ and $\elPTOL$ systems.
\begin{theorem}\label{th-1P0L-1PT0L}
The relation $\elPOL\ggg \elPTOL$ holds.
\end{theorem}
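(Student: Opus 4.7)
The plan is to exhibit, for each $m\in\N$, a language $L_m\in\cL(\elPOL)\cap\cL(\elPTOL)$ admitting an $\elPTOL$ description whose number of productions is bounded by a constant independent of $m$, while every $\elPOL$ description requires at least $m$ productions. In contrast with Theorem~\ref{th-1PD0L-1PDT0L}, where a minimal $\elPDOL$ system can be matched with a minimal $\elPDTOL$ system by tracking the deterministic orbit of the axiom, here the nondeterminism of $\elPOL$ precludes such a structural matching, and a genuinely separating family must be designed.

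I would work over a fixed small alphabet, for example $V=\{a,b\}$, and choose $L_m$ as an infinite language of the form $\Set{a^pb^q}{(p,q)\in S_m}$ where $S_m\subseteq\N^2$ is a set of count pairs that cannot be factorised as a Cartesian product of an ``$a$-increment set'' and a ``$b$-increment set'' obtainable from any small collection of rules. The $\elPTOL$ system then uses a constant number of tables, each pairing one $a$-rule with one $b$-rule so as to enforce exactly one correlated $1$-limited step. The parameter $m$ is encoded only in the lengths of the right-hand sides (or of the axiom), not in the number of rules, which yields $\Prod(H_m)\leq c$ for an absolute constant $c$ and, since $\Symb(H_m)$ stays of order $m$ while $\Symb$ of any $\elPOL$ system is at least $3m$, also the required $\Symb$-separation.

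For the lower bound, suppose a $\elPOL$ system $G$ with its single table $P$ generates $L_m$. Once both letters occur in a word, a $1$-limited step replaces exactly one $a$ and exactly one $b$, and the choice of $a$-rule from $P$ is independent of the choice of $b$-rule. Hence the set of one-step $G$-successors of such a word is a Cartesian product determined by the $a$-rules and the $b$-rules of $P$. By inspecting $m$ carefully chosen derivation targets in $L_m$, I would argue that realising all of them without also producing a word outside $L_m$ forces $m$ pairwise distinct correlated increments, so $\Prod(G)\geq m$.

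The main obstacle is this lower bound. Rules whose right-hand sides mix $a$ and $b$, or whose $1$-limited effect depends on the position of the replaced letter, complicate the clean count-pair picture. I expect to need a preliminary normal-form reduction --- arguing, using the $a^*b^*$ shape of the words in $L_m$, that rules with mixed right-hand sides can be replaced by ``pure'' rules without increasing $\Prod(G)$ --- followed by a pigeonhole argument tying the $m$ distinguishing words of $L_m$ to $m$ necessarily distinct productions of $G$.
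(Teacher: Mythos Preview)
Your proposal is a plan, not a proof, and as written it has two genuine gaps.

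\textbf{The $\Symb$ separation is too weak.} You claim that $\Symb(H_m)$ is of order $m$ while every $\elPOL$ system needs at least $3m$ symbols (coming from $\Prod(G)\ge m$ and the trivial bound $|w|+2\ge 3$ per rule). But the relation $\elPOL\gg^{\Symb}\elPTOL$ requires
\[
\lim_{m\to\infty}\frac{\Symb_{\elPTOL}(L_m)}{\Symb_{\elPOL}(L_m)}=0,
\]
and a ratio of the form $O(m)/3m$ is bounded away from $0$. At best your numbers give $\elPOL>^{\Symb}\elPTOL$, which is strictly weaker than what $\ggg$ demands. To rescue this you would need a \emph{superlinear} lower bound on $\Symb_{\elPOL}(L_m)$, and nothing in your sketch produces one.

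\textbf{The $\Prod$ lower bound is not established.} Over $V=\{a,b\}$ with words in $a^*b^*$ you correctly observe that all $a$-rules must lie in $a\to a^+$ and all $b$-rules in $b\to b^+$ (once both letters occur with multiplicity $\ge 2$), and that a single step produces a Cartesian product of increments. But you never specify $S_m$, and the passage ``by inspecting $m$ carefully chosen derivation targets \dots\ forces $m$ pairwise distinct correlated increments'' is the whole content of the lower bound and is left as an intention. The difficulty is real: with no dedicated start symbol, words of $L_m$ can be derived from one another over several steps, and the step count itself correlates the total $a$-increment with the total $b$-increment, so a small $\elPOL$ table may still reach your $m$ targets without overshooting.

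The paper avoids both problems by enlarging the alphabet rather than fighting with count pairs. It takes $V=\{a,b,c,d,e\}$, axiom $e$, and
\[
L_m=\{e\}\cup\Set{a^n x_1\cdots x_m d^n}{n\ge 1,\ x_i\in\{b,c\}}.
\]
Two deterministic tables suffice for a $\elPTOL$ system ($\Prod=10$, $\Symb=m+34$). For the lower bound, the dedicated axiom letter $e$ does all the work: in any $\elPOL$ system one argues that the only rule for $a$ is $a\to a^i$ with $i\ge 2$, so every word $a\,x_1\cdots x_m\,d$ with exactly one $a$ must be produced \emph{directly} from $e$. There are $2^m$ such words, hence $\Prod_{\elPOL}(L_m)\ge 2^m$ and $\Symb_{\elPOL}(L_m)\ge 2^m(m+4)$. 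This gives both $\ggg^{\Prod}$ and $\gg^{\Symb}$ at once, with no normal-form reduction and no pigeonhole subtleties. The moral is that a fresh axiom symbol, occurring in exactly one word of the language, turns the lower bound into a one-line counting argument; your two-letter setup gives that leverage away.
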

\begin{proof}
Let $m\in\N$, $V=\{a,b,c,d,e\}$, and 
\[L_m=\{e\}\cup\Set{a^nx_1x_2\cdots x_md^n}{n\geq 1,\ x_i\in\{b,c\},\ 1\leq i\leq m}.\]
The $\elPTOL$ system $G_m=(V,\{P_1,P_2\},e,1)$ with
\begin{align*}
P_1 &= \Sets{a\ra a, b\ra c, c\ra c, d\ra d, e\ra ab^md} \mand\\
P_2 &= \Sets{a\ra aa, b\ra b, c\ra c, d\ra dd, e\ra e}
\end{align*}
generates the language $L_m$ (the first table generates all words $awd$ for $w\in\{b,c\}^*$ with $|w|=m$; the second
table increases the number of occurrences of $a$ and $d$). Since the complexities are
\hbox{$\Prod(G_m)=10$} and $\Symb(G_m)=m+34$, we obtain
$\Prod_{\elPTOL}(L_m)\leq 10$ and $\Symb_{\elPTOL}(L_m)\leq m+34$.
Each language $L_m$ is also generated by a $\elPOL$ system, for instance,
by $G'_m=(V,\{P\},e,1)$ with
\begin{align*}
P &= \Set{e\ra awd}{w\in\{b,c\}^*,|w|=m}\cup\Sets{a\ra aa, b\ra b, c\ra c, d\ra dd}
\end{align*}
(the first application of a rule yields all words $awd$ for $w\in\{b,c\}^*$ with $|w|=m$;
from the second application on, the number of occurrences of $a$ and $d$ is increased).

Now let $H_m=(V,\{P_m\},\omega_m,1)$ be a minimal $\elPOL$ system for $L_m$. Since~$H_m$
is propagating, $\omega_m$ is the shortest word of $L_m$: $\omega_m=e$. This word has to derive 
another word of $L_m$ (otherwise only $e$ is generated). Hence, $P_m$ contains a rule
$e\ra ax_1x_2\cdots x_md$ with $x_i\in\{b,c\}$ for $1\leq i\leq m$ (if $e$ derives only
longer words, the words of length $m+2$ are not generated). Since the number of occurrences of~$a$ in the
beginning of a word in $L_m$ is unbounded, there must be a rule that increases the number. 
This cannot be done by $b$, $c$, or $d$, because then an $a$ would appear at a wrong
position. Hence, it can only be done by $a$. If the rule for $a$ contains other letters 
than $a$, then we obtain words that are not in $L_m$. Thus,~$P_m$ contains a rule $a\ra a^i$
for an integer $i\geq 2$. If two different rules $a\ra a^i$ and $a\ra a^j$ exist, then two
different words $a^iw'$ and $a^jw'$ could be generated but they are not both in $L_m$.
Hence, there is only one rule $a\ra a^i$ in $P_m$. The same argumentation holds for the rules
of $d$. Hence, the only rule for $d$ is~$d\ra d^i$ (if the rule would be $d\ra d^j$ for
a $j$ different from $i$, then the word $ax_1x_2\cdots x_md$ would derive a 
word $a^{i-1}x'_1x'_2\cdots x'_md^{j-1}\notin L_m$). The only possible rules for $b$ and $c$ 
are $b\ra b$ or $b\ra c$ and $c\ra c$ or~$c\ra b$, otherwise a word would be generated that
does not belong to $L_m$. 

Let $w=ax_1x_2\cdots x_md\in L_m$ be a word that is derived from $e$ in one step. Then all
words derived in one or more steps from $w$ contain more than one $a$ (because the only
rule for $a$ is $a\ra a^i$ with $i\geq 2$). Hence, all words with only one $a$ have to
be derived directly from $e$. Hence, $P_m$ contains at least all rules 
$e\ra ax_1x_2\cdots x_md$ with $x_i\in\{b,c\}$ for $1\leq i\leq m$. These are $2^m$ rules
with $m+4$ symbols each.

Hence, $\Prod_{\elPOL}(L_m)\geq 2^m$ and $\Symb_{\elPOL}(L_m)\geq 2^m(m+4)$
which yields $\elPOL\ggg^\Prod \elPTOL$ and $\elPOL\gg^\Symb \elPTOL$.
\end{proof}

The two classes of $\elPOL$ systems and $\elPDTOL$ systems are incomparable. 
However, the language classes are not disjoint. Hence, we can also search for
relations between incomparable classes.

\begin{theorem}\label{th-1P0L-1PDT0L}
The relations $\elPDTOL\gg^K \elPOL$ for a complexity measure $K\in\{\Prod,\Symb\}$ as well as 
$\elPOL\ggg^\Prod \elPDTOL$ and $\elPOL\gg^\Symb \elPDTOL$
are valid.
\end{theorem}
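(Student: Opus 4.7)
\emph{Plan.} The statement splits into two independent directions, which we sketch in turn.

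For the two relations $\elPOL\ggg^{\Prod}\elPDTOL$ and $\elPOL\gg^{\Symb}\elPDTOL$, the work is essentially already done by Theorem~\ref{th-1P0L-1PT0L}: each of the two tables of the $\elPTOL$ witness $G_m$ built in its proof contains exactly one rule per letter of $V=\Sets{a,b,c,d,e}$, so $G_m$ is in fact a $\elPDTOL$ system. Consequently $\Prod_{\elPDTOL}(L_m)\leq 10$ and $\Symb_{\elPDTOL}(L_m)\leq m+34$ hold on the same sequence $L_m$, while the lower bounds $\Prod_{\elPOL}(L_m)\geq 2^m$ and $\Symb_{\elPOL}(L_m)\geq 2^m(m+4)$ proved there remain unchanged. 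Both claimed relations follow immediately.

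For the reverse direction $\elPDTOL\gg^{K}\elPOL$ with $K\in\Sets{\Prod,\Symb}$, I propose using
\[L'_m=\Sets{e}\cup\Set{a_i^n}{n\geq 1,\ 1\leq i\leq m}\]
over $V=\Sets{e,a_1,\ldots,a_m}$. A single-table $\elPOL$ system with axiom $e$ and rules $e\ra a_i$ and $a_i\ra a_ia_i$ for $1\leq i\leq m$ generates $L'_m$: rules of the first type initialise one of $m$ independent branches while those of the second type grow the chosen branch by one symbol per step (1-limitation ensuring that exactly one occurrence is rewritten). This gives $\Prod_{\elPOL}(L'_m)\leq 2m$ and $\Symb_{\elPOL}(L'_m)\in O(m)$.

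For the matching $\elPDTOL$ lower bound I would first fix the axiom. Propagation forces $|\omega|=1$, hence $\omega\in V$. If $\omega=a_j$, then every word in a derivation chain from $a_j$ to $e\in L'_m$ has length one, so at some step a rule $a_k\ra e$ must be applied from some (deterministic) table $T$; but $a_k^n\in L'_m$ is then derivable for every $n\geq 1$ and applying $T$ to such an $a_k^n$ with $n\geq 2$ replaces exactly one occurrence and produces $a_k^{n-1}e\notin L'_m$, a contradiction. Hence $\omega=e$. By the same length-one argument, no rule of the form $a_l\ra a_{l'}$ with $l\neq l'$ can exist either, so each $a_i$ must be reached from $e$ by a direct rule $e\ra a_i$; determinism then forces these $m$ rules into $m$ pairwise distinct tables. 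Since every $\elPDTOL$ table contains exactly $|V|\geq m+1$ rules, one obtains $\Prod_{\elPDTOL}(L'_m)\geq m(m+1)$ and $\Symb_{\elPDTOL}(L'_m)\geq 3m(m+1)$; dividing by the $O(m)$ upper bounds yields $K_{\elPOL}(L'_m)/K_{\elPDTOL}(L'_m)\to 0$ for both measures.

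The main obstacle is the axiom-fixing analysis: one has to rule out every multi-step derivation of the length-one words of $L'_m$ from a hypothetical non-$e$ axiom, which amounts to controlling carefully the admissible form of every rule in every table (each $a_l$-rule must produce a pure power of $a_l$, and each $e$-rule maps to $e$ or to a single $a_i$). Once this is in place, the counting lower bound $m\cdot(m+1)$ is a routine consequence of the one-rule-per-letter constraint of deterministic tables.
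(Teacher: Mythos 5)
Your proof is correct. For the two relations $\elPOL\ggg^{\Prod}\elPDTOL$ and $\elPOL\gg^{\Symb}\elPDTOL$ you argue exactly as the paper does: the witness $G_m$ from Theorem~\ref{th-1P0L-1PT0L} has one rule per letter in each table, hence is a $\elPDTOL$ system, and the $2^m$ lower bound for $\elPOL$ carries over unchanged. For the reverse direction $\elPDTOL\gg^{K}\elPOL$ you diverge from the paper, which simply observes that the witness used in \cite{Har09} for $\elPDTOL\gg^K\elPTOL$ happens to be a $\elPOL$ system and therefore cites that result; you instead build a self-contained witness family $L'_m$ over a growing alphabet. Your argument goes through: propagation and $1$-limitation force all words on a derivation chain between length-one words to have length one, which pins the axiom to $e$ and rules out any table containing $a_l\ra a_{l'}$ ($l\neq l'$) or $a_l\ra e$ (apply such a table to $a_l^2\in L'_m$ to leave the language), so the $m$ distinct rules $e\ra a_i$ must sit in $m$ distinct deterministic tables of $\geq m+1$ rules each, giving the quadratic lower bounds. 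What your route buys is independence from \cite{Har09} at the cost of a larger alphabet (the paper's conventions elsewhere keep the alphabet fixed and let only the word lengths grow, which is why its witnesses also yield $\ggg^{\Prod}$-type statements; your $L'_m$ cannot do that, but only $\gg^K$ is claimed here, so this is fine). One small item you should make explicit: the relation $\elPDTOL\gg^K\elPOL$ requires $L'_m\in\cL(\elPDTOL)\cap\cL(\elPOL)$, so you must exhibit some $\elPDTOL$ system generating $L'_m$ before reasoning about a minimal one; the $m$ tables $T_i=\Sets{e\ra a_i}\cup\Set{a_j\ra a_ja_j}{1\leq j\leq m}$ do the job and also show your lower bound is tight up to constants.
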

\begin{proof}
The first statement was shown in \cite{Har09}, although not explicitly mentioned 
(the $\elPTOL$ system used in the proof of 
the relation $\elPDTOL\gg^K \elPTOL$ for $K\in\{Prod,\Symb\}$ is also
a $\elPOL$ system).

The other two results follow from the proof of Theorem~\ref{th-1P0L-1PT0L} because
the $\elPTOL$ system used is also a $\elPDTOL$ system.
\end{proof}

The results for 1-limited propagating L-systems can be seen in the following figure.

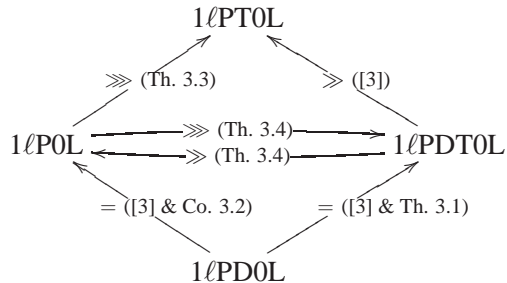
\begin{figure}[h]
\centerline{$\scalebox{1}{\xymatrix@=12mm{
& {\elPTOL} &\\
{\elPOL}\ar[ru]|-{\text{~ ~ ~ ~}\ggg~\text{\rm (Th. \ref{th-1P0L-1PT0L})}}
  \ar@<2pt>@/^3pt/[rr]|-{\ggg~\text{\rm (Th. \ref{th-1P0L-1PDT0L})}} & & 
  {\elPDTOL}\ar[lu]|-{\text{~ ~ ~}\gg~\text{\rm (\cite{Har09})}}
  \ar@<2pt>@/^3pt/[ll]|-{\gg~\text{\rm (Th. \ref{th-1P0L-1PDT0L})}}\\
& {\elPDOL}\ar[lu]|-{\text{~ ~ ~ ~ ~ ~ ~}=~\text{\rm (\cite{Har09} \& Co. \ref{cor-1PD0L-1P0L})}}
\ar[ru]|-{\text{~ ~ ~ ~ ~ ~ ~ ~ ~ ~}=~\text{\rm (\cite{Har09} \& Th. \ref{th-1PD0L-1PDT0L})}} &
}}$}
\caption{Results for 1-limited systems}\label{fig-1-new}
\end{figure}

In brackets behind a relation, you find a link to the corresponding proof.

If a sequence of languages $L_m$ is generated by $\elPOL$ systems or $\elPTOL$ systems
with a constant number of rules, then the languages $L_m$ can also be generated 
by $\elPDTOL$ systems with a constant number of rules. As a consequence, all relations
mentioned above are tight.

\section{On higher limited systems}

Let $k\geq 2$. Regarding $k$-limited propagating L-systems, the following hierachy is known (\cite{Har09}).

\begin{wrapfigure}{r}{12.6cm}
\centerline{$\scalebox{1}{\xymatrix{
& {\klPTOL} &\\
{\klPOL}\ar[ru]|-{\topp{\ggg^{\Prod}}{\gg^{\Symb}}} & & {\klPDTOL}\ar[lu]|-{\topp{{\ggg}^{\Prod}}{{\gg}^{\Symb}}}\\
& {\klPDOL}\ar[lu]|-{\topp{=^{\Prod}}{?^{\Symb}}}\ar[ru]|-{\topp{=^{\Prod}}{?^{\Symb}}} &
}}$}
\caption{Relations for $k$-limited systems}\label{fig-k-old}
\end{wrapfigure}
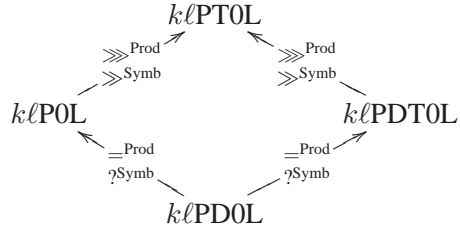

An arrow from a class $X$ to a class $Y$ with a label $R$ is to be read as the 
relation $X R Y$. If the label contains a question mark, then the relation was 
not given in~\cite{Har09}.
In this section, we give relations for these cases and also relations between
the classes $\klPOL$ and $\klPDTOL$.

\begin{theorem}\label{th-kPD0L-kPDT0L_kP0L}
We have $\klPDOL=^\Symb \klPDTOL$ and $\klPDOL=^\Symb \klPOL$ for every $k\geq 2$.
\end{theorem}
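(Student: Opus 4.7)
My proof strategy mirrors the proof of Theorem~\ref{th-1PD0L-1PDT0L}, with the bookkeeping extended to handle the $k$-limited nature for $k\geq 2$. Let $G=(V,\{P\},\omega,k)$ with $V=\{a_1,\ldots,a_n\}$ and $P=\{a_i\to w_{a_i}\mid 1\leq i\leq n\}$ be a minimal $\klPDOL$ system for $L$, and let $H=(V,\{P_1,\ldots,P_m\},\omega_H,k)$ be a minimal $\klPDTOL$ system for $L$ (the $\klPOL$ case is the special case $m=1$, as in Corollary~\ref{cor-1PD0L-1P0L}). Denote by $w_{a_i}^{P_j}$ the right-hand side of the rule for $a_i$ in table $P_j$. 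By induction on $n$, all words in $\Omega_n=\{w\mid\omega\Lra_G^n w\}$ share the same Parikh vector, since the number $\min\{k,|w|_{a_i}\}$ of rewritings of each letter $a_i$ in one $k$-limited step depends only on the Parikh vector of $w$ and the unique table $P$ acts deterministically on Parikh vectors. Consequently $\Omega_n$ has a common length $l_n$ and letter set $\alpha_n$, and $(l_n)$ is non-decreasing since $G$ is propagating.

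For a one-step derivation $w_1\Lra_H w_2$ via a table $P_j$, I use $w_1\in L(G)$ to set $w_3=w_1\in\Omega_p$ for some $p\geq 0$ and let $w_4$ be the $G$-successor of $w_3$. As in Theorem~\ref{th-1PD0L-1PDT0L}, $|w_4|\leq|w_2|$ follows from both systems being propagating and from $w_2\in L(G)$. For $k\geq 2$ each letter $a_i\in\alpha_p$ is rewritten $r_i=\min\{k,|w_1|_{a_i}\}\geq 1$ times in both steps, so
\[
|w_2|-|w_1|=\sum_{a_i\in\alpha_p}r_i(|w_{a_i}^{P_j}|-1)\qquad\text{and}\qquad
|w_4|-|w_1|=\sum_{a_i\in\alpha_p}r_i(|w_{a_i}|-1),
\]
and $|w_2|\geq|w_4|$ translates into $\sum_{a_i\in\alpha_p}r_i\bigl(|w_{a_i}^{P_j}|-|w_{a_i}|\bigr)\geq 0$.

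The main obstacle is that, unlike the $k=1$ case where $r_i=1$ uniformly, the weights $r_i$ now vary between $1$ and $k$, so this weighted inequality does not immediately yield the unweighted comparison $\sum_{a_i\in\alpha_p}|w_{a_i}^{P_j}|\geq\sum_{a_i\in\alpha_p}|w_{a_i}|$ required by the symbol count. I plan to resolve this by choosing $p$ for which every letter of $\alpha_p$ has multiplicity at least $k$ in a word of $\Omega_p$, so that $r_i=k$ uniformly and the two comparisons coincide, and by aggregating such bounds over several $p$'s so that the union $A_j=\bigcup_p\alpha_p$ (over those $p$ at which $P_j$ is applied) covers as much of $V$ as possible. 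The counting of Theorem~\ref{th-1PD0L-1PDT0L} then gives $|P_j|\geq\sum_{a_i\in A_j}(|w_{a_i}|+2)+3(n-|A_j|)$; summing over $j$ and using $|\omega_H|=|\omega|$, both being shortest words in $L$, yields $\Symb(H)\geq\Symb(G)$. Combined with the reverse inequality from the inclusion $\klPDOL\subseteq\klPDTOL$, both claimed equalities follow.
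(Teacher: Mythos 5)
Your overall blueprint is the one the paper intends (its own proof of this theorem is literally a pointer back to Theorem~\ref{th-1PD0L-1PDT0L}), and you deserve credit for noticing the one place where that transfer is not automatic: for $k\geq 2$ the length comparison only yields the \emph{weighted} inequality $\sum_{a_i\in\alpha_p}r_i\bigl(|w_{a_i}^{P_j}|-|w_{a_i}|\bigr)\geq 0$ with $r_i=\min\{k,|w_1|_{a_i}\}\in\{1,\dots,k\}$, whereas the symbol count $|P_j|=\sum_i\bigl(|w_{a_i}^{P_j}|+2\bigr)$ needs the unweighted one. This is a real obstacle (e.g.\ $r_a=2$, $r_b=1$, $|w_a^{P_j}|-|w_a|=1$, $|w_b^{P_j}|-|w_b|=-2$ satisfies the weighted inequality but violates the unweighted one), so the proof stands or falls with how you close it.

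Unfortunately, what you offer at that point is a plan rather than an argument, and the plan as stated does not work. You propose to pick $p$ such that every letter of $\alpha_p$ has multiplicity at least $k$, forcing $r_i=k$ uniformly; but such a $p$ need not exist. Take $L=\Set{a^{2n}b}{n\geq 1}$ with $k=2$: the marker $b$ occurs exactly once in every word, so $r_b=1<k$ for all $p$ while $r_a=2$, and no choice of $p$ equalizes the weights. Your fallback of ``aggregating such bounds over several $p$'s'' via $A_j=\bigcup_p\alpha_p$ mirrors the union step of Theorem~\ref{th-1PD0L-1PDT0L}, but that step only enlarges the \emph{set} of letters covered; it cannot reconcile unequal weights occurring \emph{within a single} $p$, which is where the difficulty lives. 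So the central inequality $\sum_i|w_{a_i}^{P_j}|\geq\sum_i|w_{a_i}|$ is not established, and the proof is incomplete. Two smaller points: the $\klPOL$ case is not simply ``$m=1$'' of the $\klPDTOL$ argument, since a single table may contain several rules per letter, so the identity ``one rule per letter, each applied $r_i$ times'' needs reworking there; and at the end you should bound $\Symb(H)$ by $|\omega_H|+|P_1|$ using that a single table already satisfies the bound with $A_1=V$, rather than summing over all $j$, which proves nothing extra and is not how the final estimate is assembled.
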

\begin{proof}
The proof is in both cases similar to the proof of Theorem~\ref{th-1PD0L-1PDT0L}.
\end{proof}

The two classes of $\klPOL$ systems and $\klPDTOL$ systems are incomparable. However, the classes of
the generated languages are not disjoint. There are languages in the intersection 
that can be described more efficently by $\klPDTOL$ systems than by $\klPOL$ systems.

\begin{theorem}\label{th-kP0L-kPDT0L}
The relation $\klPOL\ggg \klPDTOL$ is valid for $k\geq 2$.
\end{theorem}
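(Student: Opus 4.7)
The plan is to imitate Theorem~\ref{th-1P0L-1PDT0L} and exhibit a language family that is cheap for a $\klPDTOL$ (thanks to two tables) but expensive for a single-table $\klPOL$. For $k\geq 2$ and any $m$ that is a multiple of $k$, I would take
\[L_m = \{e\} \cup \Set{a^n w d^n}{n \in N_k,\ w \in \{b,c\}^m,\ |w|_c \equiv 0 \pmod k},\]
where $N_k\subseteq\N$ is the set of $a$-counts reachable from a single $a$ by iterating $a\to aa$ under $k$-limitation (for instance $N_2=\{1,2,4,6,8,\ldots\}$). For the upper bound I would reuse the two-table system of Theorem~\ref{th-1P0L-1PT0L}, namely $P_1=\{a\to a,b\to c,c\to c,d\to d,e\to ab^md\}$ and $P_2=\{a\to aa,b\to b,c\to c,d\to dd,e\to e\}$: under $k$-limited rewriting, each application of $P_1$ to a word $a^n w d^n$ converts $k$ arbitrarily chosen $b$'s in $w$ into $c$'s and leaves everything else fixed, so iteration of $P_1$ reaches every pattern with $k\mid|w|_c$; $P_2$ grows the $a^n,d^n$ framing along $N_k$ without touching $w$. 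This yields $\Prod=10$ and $\Symb=O(m)$.

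For the lower bound I would let $H_m=(V,\{P\},\omega,k)$ be any $\klPOL$ with $L(H_m)=L_m$ and argue as in Theorem~\ref{th-1P0L-1PT0L}. Propagation and the fact that $e$ is the shortest word of $L_m$ force $\omega=e$. A case analysis on $P$ forces $a\to aa$ and $d\to dd$ as the unique rules for $a$ and $d$: any other right-hand side, or any mixture of several right-hand sides for $a$, would under $k$-limited rewriting produce an $a$-count outside $N_k$ (and symmetrically for $d$). A further case distinction on length-preserving rules for $b$ and $c$ reduces them to $b\to c$ and $c\to c$, the only choice that preserves $k\mid|w|_c$ and does not place letters at wrong positions. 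Since $a\to aa$ strictly increases the $a$-count, every word of minimum non-trivial length $m+2$ in $L_m$ can only be derived from $e$ in one step, so $P$ must contain $e\to awd$ for each of the $\sum_{j\geq 0}\binom{m}{jk}\geq 2^m/k$ admissible patterns. Consequently $\Prod_{\klPOL}(L_m)\geq 2^m/k$ and $\Symb_{\klPOL}(L_m)\geq(m+4)\cdot 2^m/k$, which gives $\klPOL\ggg^\Prod\klPDTOL$ and $\klPOL\gg^\Symb\klPDTOL$, i.e., $\klPOL\ggg\klPDTOL$.

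The main obstacle is the rule-set case analysis: unlike in the $k=1$ setting, a single table may carry several right-hand sides per letter, and in a word with at least $k$ occurrences of that letter these alternatives may be mixed non-deterministically within one step. I would need to verify that no such mixture of rules for $a,b,c,d$ simultaneously (i) keeps every intermediate word inside $L_m$ and (ii) allows $H_m$ to reach all $2^m/k$ patterns at $n=1$ without a dedicated $e$-rule per pattern. A related technicality is that the precise description of $N_k$ depends on $k$, so the definition of $L_m$, the divisibility $k\mid m$, and the allowed form of the $a,d$-rules all have to be tuned to the chosen $k$.
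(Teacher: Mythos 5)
Your proposal follows essentially the same route as the paper: there, too, the witness language of Theorem~\ref{th-1P0L-1PT0L} is generalized to $k\geq 2$, the same two-table system (scaled to $k$) gives the constant upper bound for $\klPDTOL$, and the exponential lower bound for $\klPOL$ comes from the observation that every admissible pattern at the minimal non-trivial length must be produced by its own $e$-rule, with the rule-set case analysis delegated to the argument of Theorem~\ref{th-1P0L-1PT0L} exactly as you do. The differences are cosmetic but worth noting: the paper takes exponents $a^{kn},d^{kn}$ with axiom rule $e\to a^kb^{km}d^k$, so the reachable exponent set is simply $\{kn : n\geq 1\}$ rather than your messier $k$-dependent set $N_k$, and it counts $b$'s in a block of length $km$ (getting at least $2^m$ rules) instead of $c$'s in a block of length $m$; also, your estimate $\sum_j\binom{m}{jk}\geq 2^m/k$ is not literally true for all $m$ (e.g.\ $k=m=3$ gives $2<8/3$), but the count is still superpolynomial in $m$, which suffices for both $\ggg^{\Prod}$ and $\gg^{\Symb}$.
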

\begin{proof}
We generalize the proof of Theorem~\ref{th-1P0L-1PT0L}.
Let $k\geq 2$, $m\in\N$, and
\[L_m\!=\!\{e\}\cup\Set{a^{kn}wd^{kn}}{n\geq 1, w\in\{b,c\}^{km}, |w|_b=kj \mfor 0\leq j\leq m}.\]
This language is generated by the $\klPDTOL$ system 
$G_m=(V,\{P_1,P_2\},e,k)$ with
\hbox{$V=\{a,b,c,d,e\}$} and
$P_1 = \Sets{a\ra a, b\ra c, c\ra c, d\ra d, e\ra a^kb^{km}d^k}$,
$P_2 = \Sets{a\ra aa, b\ra b, c\ra c, d\ra dd, e\ra e}$.
From this system, we obtain the relations
$\Prod_{\klPDTOL}(L_m)\leq 10$ and $\Symb_{\klPDTOL}(L_m)\leq (m+2)k+32$.

Each language $L_m$ is also generated by a $\klPOL$ system, for instance,
by $G'_m=(V,\{P\},e,k)$ with
\[P = \Setr{e\ra a^kwd^k}{w\in\{b,c\}^{km}, |w|_b=kj \mfor 0\leq j\leq m}
  \cup\Sets{a\ra aa, b\ra b, c\ra c, d\ra dd}.\]

By a similar argumentation as in the proof of Theorem~\ref{th-1P0L-1PT0L}, the rules 
for $e$ being adopted to $k$, we obtain that a minimal $\klPOL$ system contains at 
least all rules $e\ra a^kwd^k$ where $w\in\{b,c\}^{km}$ and
the number of $b$s in $w$ is a multiple of $k$.
Thus, 
\[\Prod_{\klPOL}(L_m)\geq 2^m \qmand \Symb_{\klPOL}(L_m)\geq 2^m((m+2)k+2)\]
which gives the relations $\klPOL\ggg^\Prod \klPDTOL$ and $\klPOL\gg^\Symb \klPDTOL$. 
\end{proof}

The converse also holds. In the intersection $\cL(\klPOL)\cap\cL(\klPDTOL)$, there are languages  
that can be described more efficently by $\klPOL$ systems than by $\klPDTOL$ systems.

\begin{theorem}\label{th-kPDT0L-kP0L}
The relation $\klPDTOL\ggg \klPOL$ is valid for $k\geq 2$.
\end{theorem}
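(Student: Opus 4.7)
The plan is to construct, for each $m\in\N$, a language $L_m \in \cL(\klPOL)\cap\cL(\klPDTOL)$ together with a $\klPOL$ system of constantly many rules generating $L_m$, and then to show that every $\klPDTOL$ system for $L_m$ needs at least $m$ rules. The construction is dual in spirit to the proof of Theorem~\ref{th-kP0L-kPDT0L}: we exploit the per-occurrence rule-choice nondeterminism of a single $\klPOL$ table, which lets different occurrences of the same letter be rewritten by different rules within one step. A $\klPDTOL$ system has only one rule per letter per table, so it can imitate this per-occurrence branching only by introducing distinct tables for distinct rule assignments; whenever the language demands many such assignments, the $\klPDTOL$ rule count must grow.

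I would fix a five-letter alphabet $V=\{e,a,b,c,d\}$ and work with a language of the shape
\[L_m=\{e\}\cup\Setr{a^{kn}\,w\,d^{kn}}{n\geq 1,\ w\in W_m}\]
where $W_m\subseteq\{b,c\}^{km}$ is tailored to the construction. The $\klPOL$ side would be witnessed by the system with axiom $e$ and table $\{e\to a^kb^{km}d^k,\ a\to aa,\ d\to dd,\ b\to b,\ b\to c,\ c\to b,\ c\to c\}$, which has seven rules regardless of $m$ and, thanks to the free choice of rule on each of the $k$ rewritten occurrences of $b$ and $c$ per step (combined with nondeterministic position selection), generates exactly the prescribed middles in $W_m$.

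For the lower bound, let $H$ be a minimal $\klPDTOL$ system for $L_m$. Length preservation inside the middle and propagation force each table of $H$ to use $a\to a$, $d\to d$, and $b,c\to\{b,c\}$ (else the shape $a^{kn}\,w\,d^{kn}$ would be destroyed). The single-rule-per-letter restriction then pins down the one-step change in the Parikh vector and the positional pattern of the middle, so each table of $H$ realises at most one ``atomic transition'' on the middle. By designing $W_m$ to contain $m$ middles that are pairwise unreachable from each other under these restricted transitions, each of the $m$ witnesses must arise as the one-step image of the axiom $e$ under a distinct table of $H$; this yields $\Prod_{\klPDTOL}(L_m)\geq m$, and the same argument applied to rule lengths gives the $\Symb$ part of $\ggg$.

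The main obstacle is the precise choice of $W_m$: the naive residue-modulo-$k$ argument alone only produces a constant lower bound on the number of tables, so $W_m$ has to encode $m$ pairwise incompatible middles in a subtler, position-sensitive way while still coinciding exactly with the set generated by the single $\klPOL$ table above. Getting $W_m$ right, and simultaneously verifying that the $\klPOL$ system generates $L_m$ and no proper superset, is the delicate part; once this is in place the extraction of the constant $c$ for $\klPOL$ and the bound $m$ for $\klPDTOL$ is routine.
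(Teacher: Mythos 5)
There is a genuine gap, and you have named it yourself: the entire argument hinges on exhibiting a concrete set $W_m$ of ``pairwise unreachable'' middles, and you leave that construction open. Worse, the framework you chose makes it hard to close. With length-preserving middles over $\{b,c\}$, a deterministic table still rewrites its $\min\{k,|w|_b\}$ occurrences of $b$ at nondeterministically chosen \emph{positions}; only the count of converted letters is pinned down, not the positional pattern. So a single table with $b\ra c$ already reaches every middle whose $b$-count drops by exactly $k$ --- this is precisely the mechanism the paper uses in the \emph{opposite} direction in Theorem~\ref{th-kP0L-kPDT0L} --- and, as you concede, residue arguments modulo $k$ then give only a constant lower bound on the number of tables. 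Your claim that the one-rule-per-letter restriction ``pins down the positional pattern of the middle'' is therefore not correct in the limited setting, and no choice of $W_m\subseteq\{b,c\}^{km}$ compatible with your seven-rule $\klPOL$ table is offered that would survive this.

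The paper's proof supplies the missing idea by abandoning the $a^{kn}\cdots d^{kn}$ wrapper altogether and using a \emph{length-changing} rule. It takes $L_m=\{c\}\cup\Set{x_1\cdots x_{km}}{x_i\in\{a,bb\}}$, generated by the four-rule $\klPOL$ table $\{a\ra a,\ a\ra bb,\ b\ra b,\ c\ra a^{km}\}$. The invariant $|z|_a+\frac12|z|_b=km$ forces any $\klPDTOL$ table to use only $a\ra a$, $a\ra bb$, $b\ra b$; and since a deterministic table applying $a\ra bb$ to a word with at least two $a$'s must replace $\min\{k,|w|_a\}\geq 2$ occurrences at once (here $k\geq 2$ is essential), every one-step image containing a $bb$-block contains at least two of them. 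Hence the $km$ words $a^ibba^{km-1-i}$ with exactly one $bb$-block are underivable from any word of $L_m\setminus\{c\}$ and must each be produced by its own rule $c\ra a^ibba^{km-1-i}$, giving $\Prod_{\klPDTOL}(L_m)\geq km$ and $\Symb_{\klPDTOL}(L_m)\geq km(km+3)$. Your high-level strategy (force $m$ words to be one-step images of the axiom under distinct rules) is the right one, but without a concrete, verified $L_m$ the proof is not complete.
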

\begin{proof}
Let $k\geq 2$. Further, let $m\in\N$ be a natural number, $V=\{a,b,c\}$, and
\[L_m=\{c\}\cup\Set{x_1x_2\cdots x_{km}}{x_i\in\{a,bb\},\ 1\leq i\leq km}.\]
The $\klPOL$ system $G_m=(V,\{P\},c,k)$ with
$P = \Sets{a\ra a, a\ra bb, b\ra b, c\ra a^{km}}$
generates the language $L_m$ (in each step, an arbitrary number $j$ of $a$s in a word
with $0\leq j\leq k$ can be choosen to be changed to $bb$).
As $\Prod(G_m)=4$ and $\Symb(G_m)=km+13$, we obtain
$\Prod_{\klPOL}(L_m)\leq 4$ and $\Symb_{\klPOL}(L_m)\leq km+13$.

Since $L_m\setminus\{c\}$ is finite, there is also a $\klPDTOL$ system that generates
the language (all words are derived from $c$).

Let $H_m$ be a minimal $\klPDTOL$ system. The axiom is $c$ because it is the shortest word of $L_m$.
For each word $z\in L_m\setminus\{c\}$, the equation $|z|_a+\frac{1}{2}|z|_b=km$ holds.
Hence, the only possible rule for $b$ in any table is $b\ra b$; the only rules for
$a$ are $a\ra a$ and $a\ra bb$. The words $a^ibba^{km-1-i}$ ($0\leq i\leq km-1$) 
cannot be derived from other words of $L_m\setminus\{c\}$. Thus, $H_m$ contains at 
least those $km$ rules $c\ra a^ibba^{km-1-i}$ and the~$km\cdot (km+3)$ symbols involved. 

Hence, $\Prod_{\klPDTOL}(L_m)\geq km$ and $\Symb_{\klPDTOL}(L_m)\geq km(km+3)$.
This leads to the relations $\klPDTOL\ggg^\Prod \klPOL$ and $\klPDTOL\gg^\Symb \klPOL$.
\end{proof}

The results for $k$-li\-mi\-ted propagating L-systems can be seen in the following figure.

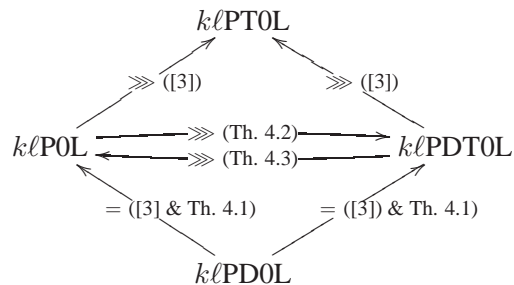
\begin{figure}[h]
\centerline{$\scalebox{1}{\xymatrix@=12mm{
& {\klPTOL} &\\
{\klPOL}\ar[ru]|-{\text{~ ~ ~ ~}\ggg~\text{\rm (\cite{Har09})}}
  \ar@<2pt>@/^3pt/[rr]|-{\ggg~\text{\rm (Th. \ref{th-kP0L-kPDT0L})}} 
  & & 
  {\klPDTOL}\ar[lu]|-{\text{~ ~ ~}\ggg~\text{\rm (\cite{Har09})}}
  \ar@<2pt>@/^3pt/[ll]|-{\ggg~\text{\rm (Th. \ref{th-kPDT0L-kP0L})}}
  \\
& {\klPDOL}\ar[lu]|-{\text{~ ~ ~ ~ ~ ~ ~}=~\text{\rm (\cite{Har09} \& Th. \ref{th-kPD0L-kPDT0L_kP0L})}}
\ar[ru]|-{\text{~ ~ ~ ~ ~ ~ ~ ~ ~ ~}=~\text{\rm (\cite{Har09}) \& Th. \ref{th-kPD0L-kPDT0L_kP0L})}} &
}}$}
\caption{Results for $k$-limited systems}\label{fig-k-new}
\end{figure}
In brackets behind a relation, you find a link to the corresponding proof.
All relations mentioned above are tight.

\section{On arbitrarily limited systems}

Regarding limited propagating L-systems, the following hierachy is known (\cite{Har09}).

\begin{wrapfigure}{r}{10.5cm}
\centerline{$\scalebox{1}{\xymatrix{
& {\lPTOL} &\\
{\lPOL}\ar[ru]|-{\topp{\ggg^{\Prod}}{\gg^{\Symb}}} & & {\lPDTOL}\ar[lu]|-{\topp{\ggg^{\Prod}}{\gg^{\Symb}}}\\
& {\lPDOL}\ar[lu]|-{\topp{=^{\Prod}}{?^{\Symb}}}\ar[ru]|-{\topp{=^{\Prod}}{?^{\Symb}}} &
}}$}
\caption{Relations for limited systems}\label{fig-l-old}
\end{wrapfigure}
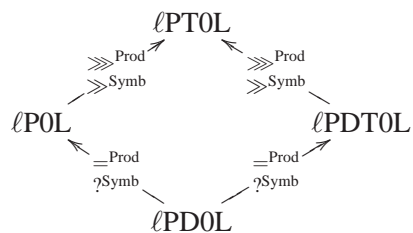

In this section, we prove relations for the open cases and also relations between
the classes $\lPOL$ and $\lPDTOL$.

The proofs for $k$-limited systems cannot directly be used because, for some $k$-limited
system of one kind~$X$ of L-systems, there can be a minimal $m$-limited system of 
another kind $Y$ with $m\not=k$ which has other properties than a minimal $k$-limited 
system of kind $Y$.

\begin{theorem}\label{th-lPD0L-lPDT0L_lP0L}
The relations $\lPDOL\gg^\Symb X$ hold for $X\in\{\lPDTOL,\lPOL\}$.
\end{theorem}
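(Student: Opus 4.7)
The plan is to exhibit, for each $X \in \{\lPDTOL, \lPOL\}$, a family of languages $L_m \in \cL(\lPDOL) \cap \cL(X)$ for which $\lim_{m \to \infty} \Symb_X(L_m) / \Symb_{\lPDOL}(L_m) = 0$. The remark preceding the theorem pinpoints why the $k$-limited equalities of Theorem~\ref{th-kPD0L-kPDT0L_kP0L} do not transfer: an optimal $\lPDTOL$ (or $\lPOL$) system for $L_m$ may well use a different limit parameter than any optimal $\lPDOL$ system, and the relation $\Symb_{\klPDOL}(L) = \Symb_{\klPDTOL}(L)$ for every fixed $k$ no longer implies equality once we take infima over all $k$.

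Accordingly, I would fix a constant-size system $G$ of kind $X$ --- for example a two-table $\lPDTOL$ over a small alphabet with a ``start'' table that produces the shortest word $\omega$ from the axiom and a ``growth'' table which, under a limit $k$, extends the current word by a block of length proportional to $k$ --- and set $L_m := L(G_{k=m})$, the language obtained by equipping $G$ with the limit $k = m$. Since the value of $k$ is not counted by $\Symb$, this already yields $\Symb_X(L_m) \leq \Symb(G) = O(1)$. I would then verify $L_m \in \cL(\lPDOL)$ by exhibiting a concrete $\lPDOL$ system of total complexity $O(m)$, which hard-codes the growth block of length $\Theta(m)$ either inside the single rule for the growing letter or inside an enlarged axiom, giving the upper bound $\Symb_{\lPDOL}(L_m) \leq O(m)$.

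The main obstacle is the matching lower bound $\Symb_{\lPDOL}(L_m) \geq c \cdot m$ for some constant $c > 0$, which must hold for every possible choice of the limit $k'$ of a hypothetical minimal $\lPDOL$ system $H$ for $L_m$. I would argue by a case split on $k'$: if $k'$ is small, then the unique rule for the growing letter must have length at least $\Omega(m)$, otherwise its rewriting cannot produce the length increments that $L_m$ enforces between consecutive words; if $k'$ is large, then either the axiom must be correspondingly long (to supply enough occurrences of the growing letter before the $k'$-limit can ever be saturated at the right step) or the unique deterministic $k'$-limited sequence starting from the axiom eventually deviates from $L_m$, contradicting $L(H) = L_m$. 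In either case $\Symb(H) \geq c\cdot m$. The analysis is essentially the same for $X = \lPOL$, with the second table of $G$ replaced by an additional rule per growing letter inside a single table; the decisive point in both cases is that a single deterministic table with a single value of $k'$ cannot simultaneously mimic the ``start'' and the ``growth'' phases that $G$ realises with two tables, no matter how $k'$ is chosen. Combining the upper and lower bounds yields both $\gg^\Symb$ relations.
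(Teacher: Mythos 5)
Your overall strategy (a family $L_m$ with $\Symb_X(L_m)$ asymptotically negligible against $\Symb_{\lPDOL}(L_m)$) is the right shape, but the proposal has a genuine gap: neither of its two essential ingredients is actually supplied. You never write down a concrete $L_m$, and the decisive lower bound $\Symb_{\lPDOL}(L_m)\geq c\cdot m$ is only gestured at. Your case split on the limit $k'$ of a hypothetical minimal $\lPDOL$ system does not close the argument, because the dangerous adversary is neither ``$k'$ small'' nor ``$k'$ so large that the axiom must be long'': it is $k'=\Theta(m)$ combined with constant-size rules. If your growth table merely ``extends the current word by a block of length proportional to $k$'', the generated lengths eventually form a single arithmetic progression with difference $\Theta(m)$, and a one-table deterministic system with a rule like $a\ra aa$ and limit $k'=\Theta(m)$ reproduces exactly that growth with $O(1)$ rule symbols --- the limit is not charged by $\Symb$, so the freedom you exploit on the $X$ side is equally available to the $\lPDOL$ side, and nothing in your sketch denies it. (Note also that your target $\Symb_X(L_m)=O(1)$ would yield $\lPDOL\ggg^\Symb X$, contradicting the paper's earlier remark that no $\ggg^\Symb$ relation exists; that remark's counting argument admittedly overlooks the uncounted limit parameter, but the tension is a warning sign that the constant-size upper bound is the hard part, not a freebie.)

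The paper sidesteps all of this with $L_m=\{a^{m+im^2}b\mid i\geq 0\}\cup\{a^{m+im^2+1}c\mid i\geq 0\}\cup\{d\}$, which interleaves two progressions with increments $1$ and $m^2-1$, distinguished by the end markers $b$ and $c$. A propagating deterministic one-table system has a unique derivation running through all of $L_m$, so the same table must realise both increments in alternation; analysing the length increases forces $k=1$ and a rule $c\ra a^{m^2-2}b$, hence $\Symb_{\lPDOL}(L_m)=m^2+m+12=\Theta(m^2)$. The comparison systems are then only $O(m)$, not $O(1)$: an $m$-limited $\PDTOL$ system with two tables and an $m$-limited $\POL$ system with two rules for $d$. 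The ratio $\Theta(m)/\Theta(m^2)$ still tends to $0$, which is all that $\gg^\Symb$ requires. To rescue your plan you would need to (i) exhibit $L_m$ explicitly, (ii) prove $L_m\in\cL(\lPDOL)$, and (iii) prove the lower bound uniformly over all $k'$, in particular in the regime $k'=\Theta(m)$; step (iii) is precisely where a language built from a single scaled growth phase will fail, and it is why the paper's language encodes two incompatible increments rather than one.
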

\begin{proof}
Let $m\in\N$, $V=\{a,b,c,d\}$, and
\[L_m=\Set{a^{m+im^2}b}{i\geq 0}\cup\Set{a^{m+im^2+1}c}{i\geq 0}\cup\Sets{d}.\]
Consider an $\lPDOL$ system $G_m=(V,\{P_{G_m}\},\omega,k)$ generating the language~$L_m$.
The only derivation in~$G_m$ is
\begin{multline*}
d\Lra a^mb\Lra a^{m+1}c\Lra a^{m+m^2}b\Lra a^{m+m^2+1}c\Lra \cdots\\
\cdots\Lra a^{m+im^2}b\Lra a^{m+im^2+1}c\Lra a^{m+(i+1)m^2}b\Lra a^{m+(i+1)m^2+1}c\Lra\cdots
\end{multline*}
because $G_m$ is propagating. From this derivation, we obtain that $G_m$ is minimal if
$k=1$ and the rule set $P_{G_m}$ is $\Sets{a\ra aa, b\ra c, c\ra a^{m^2-2}b, d\ra a^mb}$.
Hence, the symbol complexity of $L_m$ is $\Symb_{\lPDOL}(L_m)=m^2+m+12$.

The language $L_m$ can also be generated by an $m$ limited $\PDTOL$ system
\[H_m=(V,\{P_{m,1},P_{m,2}\},d,m)\]
with $P_{m,1} = \Sets{a\ra aa^m, b\ra b, c\ra c, d\ra a^mb}$
and $P_{m,2} = \Sets{a\ra a, b\ra ac, c\ra c, d\ra d}$.
The first table derives from $d$ the word $a^mb$ and from a word $a^px\in L_m$ with $x\in\{b,c\}$ the
word $a^{p+m^2}x$ (hence, every second word is generated). The second table derives from
a word~$a^{m+im^2}b$ with $i\geq 0$ the word~$a^{m+im^2+1}c$ and leaves the other words unchanged.

Hence, we obtain for the symbol complexity of $L_m$ with respect to $\lPDTOL$ systems 
\[\Symb_{\lPDTOL}(L_m)\leq 2m+26\]
which yields the relation $\lPDOL\gg^\Symb\lPDTOL$.

The language $L_m$ can also be generated by an $m$ limited $\POL$ system
$I_m=(V,\{P_{I_m}\},d,m)$ with
$P_{I_m} = \Sets{a\ra aa^m, b\ra b, c\ra c, d\ra a^mb, d\ra a^{m+1}c}$.
In this system, we obtain the words $a^mb$ and $a^{m+1}c$ from $d$ and then, 
by the other rules, from each word every second word.

Hence, we obtain $\Symb_{\lPOL}(L_m)\leq 3m+17$ for the symbol complexity of $L_m$ with 
respect to $\lPOL$ systems which yields the relation $\lPDOL\gg^\Symb\lPOL$.
\end{proof}

The two classes of $\lPOL$ systems and $\lPDTOL$ systems are incomparable. However, 
there are languages in the intersection of the classes that can be described more
efficently by $\lPDTOL$ systems than by $\lPOL$ systems and vice versa.

\begin{theorem}\label{th-lP0L-lPDT0L-lP0L}
The relations $\lPOL\ggg \lPDTOL$ and $\lPDTOL\ggg \lPOL$ hold.
\end{theorem}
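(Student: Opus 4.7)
The plan is to exhibit, for each direction, a family of languages in $\cL(\lPOL)\cap\cL(\lPDTOL)$ realizing the required separation. For $\lPOL\ggg\lPDTOL$ I would reuse the language $L_m$ from Theorem~\ref{th-kP0L-kPDT0L} with some fixed $k\geq 2$; the $\klPDTOL$ system constructed there is also a $\lPDTOL$ system and gives $\Prod_{\lPDTOL}(L_m)\leq 10$ and $\Symb_{\lPDTOL}(L_m)\leq (m+2)k+32$. To obtain the matching $\lPOL$ lower bound uniformly in the limit $\ell$ of the candidate system, I plan to sharpen the argument of Theorem~\ref{th-kP0L-kPDT0L}: since every non-axiom word of $L_m$ satisfies $|w|_a=|w|_d$, independent nondeterministic choices for the $a$-rule and the $d$-rule would break this equality unless both rules are unique and of matching length. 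Since the $a$-count of words in $L_m$ is unbounded, this common length must be at least $2$, so the $a$-count strictly grows in every step from a non-axiom word. Hence each of the (at least) $2^m$ shortest non-axiom words, those with $a$-count exactly $k$, must be produced by its own $e$-rule, giving $\Prod_{\lPOL}(L_m)\geq 2^m$ and the corresponding symbol bound.

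For $\lPDTOL\ggg\lPOL$ the plan is to start from the language of Theorem~\ref{th-kPDT0L-kP0L}. The $\klPOL$ system constructed there is a $\lPOL$ witness with four productions, independently of what limits are used by other systems. For the $\lPDTOL$ lower bound I would extend the leaf argument of Theorem~\ref{th-kPDT0L-kP0L} uniformly in the limit $\ell$ of the candidate $\lPDTOL$ system. When $\ell\geq 2$ the original argument carries over unchanged, because a rule $a\to bb$ applied with $\ell\geq 2$ rewrites at least two $a$'s simultaneously and so cannot produce a single-$bb$ word $a^ibba^{km-1-i}$ from another word of $L_m\setminus\{c\}$. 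The delicate case is $\ell=1$, where a single-letter application of $a\to bb$ would allow gradual construction of every word from $a^{km}$; to rule this case out I would modify $L_m$ by a constraint on the number of $bb$-blocks (for instance, requiring it to be $0$, $1$, or a multiple of $k$), so that any $\ell=1$ application of $a\to bb$ to a single-$bb$ word already exits the language, while a suitably restricted $\klPOL$ system still generates the modified language with a bounded number of productions. With this adjustment the leaf argument would hold uniformly in $\ell$ and give $\Prod_{\lPDTOL}(L_m)\geq km$.

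The main obstacle is precisely the $\ell=1$ case in the second direction: the language must be rich enough for the nondeterministic $\lPOL$ witness to cover it, yet sparse enough that no choice of limit gives $\lPDTOL$ a shortcut. Once both families are in place, the symbol bounds follow from the production bounds together with the lengths of the forced $e$- or $c$-rules, and the $\ggg^\Prod$ and $\gg^\Symb$ conclusions assemble as in the $k$-limited theorems.
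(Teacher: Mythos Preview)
Your plan for $\lPOL\ggg\lPDTOL$ is sound; the paper in fact uses the simpler language from Theorem~\ref{th-1P0L-1PT0L} (essentially the case $k=1$), observing that the minimality argument there is independent of the limit of the candidate $\lPOL$ system. Your choice of the $k\geq 2$ language from Theorem~\ref{th-kP0L-kPDT0L} also works, but carries the extra divisibility constraint on $|w|_b$ for no gain.

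The gap is in the second direction. Your diagnosis of the $\ell=1$ obstruction is correct, but the repair you propose does not close it. Take your suggested constraint ``the number of $bb$-blocks is $0$, $1$, or a multiple of $k$'' for some fixed $k\geq 3$. To retain a bounded $\lPOL$ description you still need to generate all single-$bb$ words with $O(1)$ productions. In an $\ell$-limited $\POL$ system with rules among $a\to a$, $a\to bb$, $b\to b$, $c\to\cdots$, a derivation step on a word with at least $\ell$ letters $a$ changes the $bb$-count by some $j\in\{0,1,\ldots,\ell\}$; if $a\to a$ is present you can hit every intermediate value and leave the constrained language, while if $a\to a$ is absent the $bb$-count jumps by exactly $\ell$ and you never reach the value~$1$ from $a^{km}$. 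Either way the single-$bb$ words force $\Theta(km)$ axiom-rules in the $\lPOL$ system, destroying the upper bound you need. More generally, any constraint phrased purely in terms of the $bb$-count faces this dilemma: a deterministic $\ell$-step with $a\to bb$ adds a \emph{fixed} number of $bb$-blocks, so the ``leaf'' words you want to isolate for the $\lPDTOL$ lower bound are exactly the words that are hard to produce cheaply on the $\lPOL$ side.

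The paper breaks this symmetry by introducing a third block type: it uses $x_i\in\{a,bb,ccc\}$ with the parity condition $|w|_a$ even, and the $2$-limited $\POL$ witness $\{a\to bb,\ a\to ccc,\ b\to b,\ c\to c,\ d\to a^{2m}\}$. The point is that words containing exactly one $bb$-block and one $ccc$-block satisfy the parity condition yet require two \emph{different} $a$-rewrites in the same step, which determinism forbids for every $\ell\geq 2$; and for $\ell=1$ any non-identity $a$-rule applied to $a^{2m}$ breaks the parity, forcing $a\to a$ in every table. So the missing idea is not a numerical constraint on one block type but a second block type that makes the target words intrinsically ``two-coloured''.
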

\begin{proof}
Let $m\in\N$, $V=\{a,b,c,d,e\}$, and
\[L_m=\{e\}\cup\Set{a^nx_1x_2\cdots x_md^n}{n\geq 1,\ x_i\in\{b,c\},\ 1\leq i\leq m}.\]
Every language $L_m$ is generatable by a $\elPDTOL$ system $G_m$ -- as shown in the proof of 
Theorem~\ref{th-1P0L-1PT0L}. From this proof, we obtain 
$\Prod_{\lPDTOL}(L_m)\leq 10$ and $\Symb_{\lPDTOL}(L_m)\leq m+34$.
Further, we know from that proof that each language~$L_m$ can also be generated 
by a $\elPOL$ system~$H_m$. The argumentation on the minimal system does not depend 
on the limit $k$. Hence, any limited $\POL$ system has at least $2^m$ rules 
and $2^m(m+4)$ symbols.

Thus, $\lPOL\ggg^\Prod \lPDTOL$ and $\lPOL\gg^\Symb \lPDTOL$.

To prove the other relation, let $m\in\N$, $V=\{a,b,c,d\}$, and
\[
L_m=\Sets{d}\cup\{\: w \;|\; w=x_1x_2\cdots x_{2m},\ x_i\in\{a,bb,ccc\}, 1\leq i\leq 2m,\\
|w|_a=2n,\ 0\leq n\leq m \:\}.
\]

The language $L_m$ can be generated by a $2\lPOL$ system $G_m=(V,\{P\},d,2)$ with
the rule set being \hbox{$P=\Sets{a\ra bb, a\ra ccc, b\ra b, c\ra c, d\ra a^{2m}}$}.
Hence, we obtain for the complexities 
\[\Prod_{\lPOL}(L_m)\leq 5 \qmand \Symb_{\lPOL}(L_m)\leq 2m+18.\]

The language $L_m\setminus\{d\}$ is finite. Thus, $L_m$ can be generated by
a limited $\PDTOL$ system with a table for each rule $d\ra w$ where $w\in L_m\setminus\{d\}$.

For each word $z\in L_m\setminus\{d\}$, the equation $|z|_a+\frac{1}{2}|z|_b+\frac{1}{3}|z|_c=2m$ holds.
Hence, the only possible rules in a minimal $\lPDTOL$ system $H_m$ 
are $a\ra a$, $a\ra bb$, $a\ra ccc$, $b\ra b$, and $c\ra c$.

If $H_m$ is a $\elPDTOL$ system, then the word $a^{2m}$ can derive the word $bba^{2m-1}$
or~$ccca^{2m-1}$ which are not in $L_m$ unless the only rule for $a$ is $a\ra a$. But 
then every word $w\in L_m\setminus\{d\}$
has to be derived from~$d$. This yields more than $m$ rules and more than $m^2$ symbols.

If $H_m$ is a $\klPDTOL$ system for some $k\geq 2$, then any word $x_1x_2\cdots x_{2m}$
where one subword $x_i$ is~$bb$, one subword $x_j$ is $ccc$, and all 
other subwords are~$a$ can only be derived from $d$ (if $bb$ or $ccc$ is derived from $a$,
then a second subword~$bb$ or~$ccc$ must exist, because there is only one rule for $a$
and $k\geq 2$). This also yields more than $m$ rules and more than $m^2$ symbols.

Hence, we obtain $\Prod_{\lPDTOL}(L_m)\geq m$ and $\Symb_{\lPDTOL}(L_m)\geq m^2$.
This gives us the relations $\lPDTOL\ggg^\Prod \lPOL$ and $\lPDTOL\gg^\Symb \lPOL$.
\end{proof}

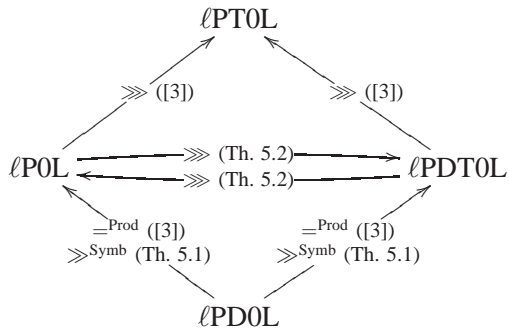
\begin{wrapfigure}{r}{10.1cm}
\centerline{$\scalebox{1}{\xymatrix@=15mm{
& {\lPTOL} &\\
{\lPOL}\ar[ru]|-{\text{~ ~ ~ ~}\ggg~\text{\rm (\cite{Har09})}}
  \ar@<2pt>@/^3pt/[rr]|-{\ggg~\text{\rm (Th. \ref{th-lP0L-lPDT0L-lP0L})}}
  & & 
  {\lPDTOL}\ar[lu]|-{\text{~ ~ ~ ~}\ggg~\text{\rm (\cite{Har09})}}
  \ar@<2pt>@/^3pt/[ll]|-{\ggg~\text{\rm (Th. \ref{th-lP0L-lPDT0L-lP0L})}}
  \\
& 
{\lPDOL}\ar[lu]|-{\topp{=^{\Prod}~\text{\rm (\cite{Har09})}}{\gg^{\Symb}~\text{\rm (Th. \ref{th-lPD0L-lPDT0L_lP0L})}}}
\ar[ru]|-{\topp{=^{\Prod}~\text{\rm (\cite{Har09})}}{\gg^{\Symb}~\text{\rm (Th. \ref{th-lPD0L-lPDT0L_lP0L})}}} &
}}$}
\caption{Results for limited systems}\label{fig-l-new}
\end{wrapfigure}
The results for limited propagating L-systems can be seen in Figure~\ref{fig-l-new}.
In brackets behind a relation, you find a link to the corresponding proof.
All relations mentioned above are tight.

In the remaining part of this section, we investigate the relations 
between $k$-limited and arbitrarily limited propagating
L-systems. 

The relations that are already known (cf.~\cite{Har09}) are to be seen in Figure~\ref{fig-kl-old}. These 
relations hold for $k\geq 1$.

\begin{figure}[ht]
\centerline{$\xymatrix@C=4mm@R=9mm{
& & & & & {\lPTOL} & \\
& {\klPTOL}\ar@{-->}[rrrru]|-{\topp{>^{\Prod}}{>^{\Symb}}} 
  & & & {\lPOL}\ar[ru] & & {\lPDTOL}\ar[lu]\\
{\klPOL}\ar[ru]\ar@{-->}[rrrru]|-(.8){\topp{>^{\Prod}}{>^{\Symb}}} & 
  & {\klPDTOL}\ar[lu]\ar@{-->}[rrrru]|-(.25){\topp{?^{\Prod}}{?^{\Symb}}} 
  & & & {\lPDOL}\ar[lu]\ar[ru] &\\
& {\klPDOL}\ar@{-->}[rrrru]|-{\topp{=^{\Prod}}{?^{\Symb}}}\ar[lu]\ar[ru] & & & & &
}$}
\caption{Relations between $k$-limited and limited propagating L-systems}\label{fig-kl-old}
\end{figure}
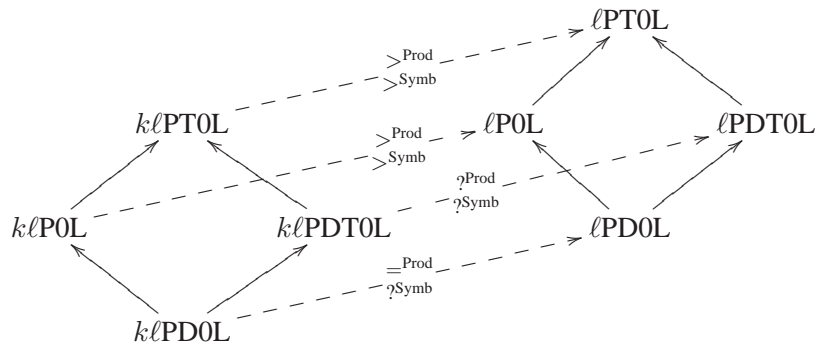

For $k=1$, the relations $\klPDTOL\gg^K\lPDTOL$ for $K\in\Sets{\Prod,\Symb}$ are
given in \cite{Har09}. We now prove relations for the remaining cases and give stronger results
for the existing ones.

\begin{theorem}\label{th-kPDOL-lPDOL}
The relation $\klPDOL\gg^\Symb \lPDOL$ holds for $k\geq 1$.
\end{theorem}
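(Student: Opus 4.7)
The plan is to exhibit a sequence of unary languages $L_m\in\cL(\klPDOL)\cap\cL(\lPDOL)$ whose $\lPDOL$ symbol complexity is linear in $m$ while every $\klPDOL$ system for them needs $\Theta(m^2)$ symbols. Concretely, I would take
\[
L_m=\Set{a^{m+i\cdot km^2}}{i\geq 0},
\]
an arithmetic progression of unary word lengths with common difference $km^2$, chosen so that the divisibility by $k$ implicitly required by the $\klPDOL$ rewriting is automatically met.

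First I would produce a small $\lPDOL$ system: the $m$-limited system with axiom $a^m$ and single rule $a\ra a^{km+1}$. A short verification suffices -- starting from $a^m$, the limit $m$ never exceeds the current number of $a$s, so in every derivation step exactly $m$ occurrences of $a$ are rewritten, each contributing $km$ fresh letters, and hence the length grows by exactly $km^2$ per step. This gives $\Symb_{\lPDOL}(L_m)\leq m+(km+1)+2=(k+1)m+3$, which is $O(m)$ for fixed $k$.

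Next I would establish the matching lower bound $\Symb_{\klPDOL}(L_m)\geq m^2+m+3$, which is the main technical point. Since $L_m\subseteq\Sets{a}^*$, every derivable word in any $\klPDOL$ system for $L_m$ is a power of $a$, so only the letter $a$ ever occurs in a derivation; any extra letter in the alphabet forces an obligatory rule of size at least three without contributing to the language, so the alphabet can be assumed to be $\Sets{a}$. Propagation forces the axiom to be the shortest word of $L_m$, namely $a^m$, and the single rule has the form $a\ra a^r$. For $m\geq k$ every step rewrites exactly $k$ of the $a$s, adding $k(r-1)$ letters, so the derivable lengths form an arithmetic progression $m,\,m+k(r-1),\,m+2k(r-1),\ldots$; matching this with $L_m$ forces $k(r-1)=km^2$, hence $r=m^2+1$. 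The resulting system simultaneously witnesses $L_m\in\cL(\klPDOL)$ and yields $\Symb_{\klPDOL}(L_m)=m+(m^2+1)+2=m^2+m+3$.

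Combining the two estimates,
\[
\lim_{m\to\infty}\frac{\Symb_{\lPDOL}(L_m)}{\Symb_{\klPDOL}(L_m)}\leq\lim_{m\to\infty}\frac{(k+1)m+3}{m^2+m+3}=0,
\]
so $\klPDOL\gg^\Symb\lPDOL$. The real obstacle is the lower bound step: one has to rule out that a $\klPDOL$ system can beat the obvious single-rule construction by using a larger alphabet, an axiom other than $a^m$, or by exploiting the initial ``short'' phase during which the word has fewer than $k$ occurrences of $a$. All of these alternatives fail because the language is unary, $a^m$ is its unique shortest word, and $m\geq k$ holds for all but finitely many~$m$.
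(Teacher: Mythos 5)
Your proposal is correct and follows essentially the same route as the paper's proof: a unary arithmetic progression whose common difference is $km^2$ (the paper uses axiom length $5mk$ and difference $5m^2k$) forces the unique rule of a $k$-limited $\PDOL$ system to have length $\Theta(m^2)$, while an $m$-limited (in the paper, $5mk$-limited) system spreads the growth over more rewritten occurrences and gets by with a rule of length $O(m)$. The only cosmetic difference is that your axiom $a^m$ requires restricting to $m\geq k$ (which you acknowledge and which is harmless for the limit), whereas the paper's axiom $a^{5mk}$ avoids this case split.
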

\begin{proof}
Let $k\geq 1$. For $m\in\N$, let $L_m=\{ a^{5mk(1+nm)} \mid n\geq 0\}$.
Further, let $G_m=(\{a\},\{P\},\omega,k)$ be a $\klPDOL$ system that generates the
language~$L_m$ and that is minimal with respect to the number of symbols. Then we
have $\omega=a^{5mk}$. 
From $\omega$, the word $a^{5mk+5m^2k}$ must be derived. Hence, the rule in $P$ 
is $a\to aa^{5m^2}$. With this rule, the number of $a$s is increased by $5m^2k$
in each step.
We have $\Symb_{\klPDOL}(L_m)=\Symb(G_m)=5mk+5m^2+3$.

The system $H_m=(\{a\},\{a\to aa^m\},a^{5mk},5mk)$ for $m\geq 1$ is a limited $\PDOL$ system
also generating~$L_m$.
We obtain $\Symb_{\lPDOL}(L_m)\leq 5mk+m+3$.
Hence, $\klPDOL\gg^{\Symb}\lPDOL$ for each $k\geq 1$.
\end{proof}

For the relations between the various types of $k$-limited and limited 
propagating L-systems, we give a results that covers them all.

\begin{theorem}\label{th-kPT0L-lPD0L}
The relation $\klPTOL\ggg \lPDOL$ is valid for $k\geq 1$.
\end{theorem}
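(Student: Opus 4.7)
Plan: The target is $\klPTOL\ggg \lPDOL$ for each $k\geq1$, that is, a sequence of languages $L_m\in\cL(\klPTOL)\cap\cL(\lPDOL)$ with $\Prod_{\lPDOL}(L_m)$ uniformly bounded by a constant and $\Prod_{\klPTOL}(L_m)\geq m$, the $\Symb$-separation then following by the convention of the paper. The single ``resource'' available to $\lPDOL$ that $\klPTOL$ lacks for fixed $k$ is the freedom to scale the limit with $m$; so the whole plan is to design a chain whose $m$ distinct one-step growth regimes are all absorbed by one $\lPDOL$-rule with a large limit but need $m$ separate rules in any fixed-$k$ $\klPTOL$-simulation.

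The construction would build on the language used in the proof of Theorem~\ref{th-kPDOL-lPDOL}, which already shows $\klPDOL\gg^\Symb \lPDOL$ via an arithmetic-progression language generated by a rule of the form $a\to aa^m$ with limit proportional to $m$. That proof only forces $\Symb_{\klPDOL}$ to be $\Omega(m^2)$, not $\Prod_{\klPTOL}$ to grow, because a single long rule suffices in $\klPDOL$. To strengthen this into a $\Prod$-separation I would replace the plain arithmetic progression with a multi-stage chain, generated over a constant-size alphabet, whose $m$ stages each demand a pairwise-distinct increment. The $\lPDOL$ witness again uses one rule per letter, so $\Prod_{\lPDOL}(L_m)\leq|V|$ is constant, while the large limit of the $\lPDOL$ system adapts to each stage automatically.

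The $\klPTOL$ lower bound would proceed as in Theorem~\ref{th-kPDOL-lPDOL}. Fix a minimal $\klPTOL$-system $H$ generating $L_m$. By propagation and the chain structure the axiom of $H$ is the shortest word of $L_m$, and every other designated chain word must be the image under a single step of $H$ of a strictly shorter chain word. Let $\delta_1,\dots,\delta_m$ denote the $m$ distinct one-step increments needed to traverse the designated stages. Each $\delta_j$ must lie in the sum-set of some table of $H$. Because every table is universally applicable on every word of $L_m$, the sum contributed by a table must land inside $L_m$ from every applicable $L_m$-word; this severely restricts the sum-set of each table, in the spirit of the ``extremal word'' argument of Theorem~\ref{th-kPDT0L-kP0L}. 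A counting of universally valid sums then shows that each table of $H$ can be responsible for at most a bounded number of the $\delta_j$, so $H$ must contain at least $m$ tables (up to a constant) and therefore $\Prod_{\klPTOL}(L_m)\geq m$. The $\Symb$-bound comes out of the same analysis, since the rules realising the large increments at late stages must themselves have length $\Omega(m/k)$.

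The main obstacle is the simultaneous enforcement of $L_m\in\cL(\klPTOL)$ and of the per-table universality constraint that bounds each table's coverage to at most a constant number of stages. If the chain is too sparse, no non-trivial universally valid sum exists and $L_m$ escapes $\cL(\klPTOL)$; if the chain is too regular, a single universal sum realises all stages and $\klPTOL$ collapses to a single-rule description. The construction must therefore insert enough auxiliary or padding words along the chain to permit non-trivial universal sums but keep the $m$ designated stages pairwise incompatible under any single table. Carrying this balancing act out in detail, uniformly for every $k\geq 1$, is the technical heart of the proof, combining the full-parallel exploitation of a flexible limit in $\lPDOL$ with a combinatorial rule-counting argument adapted from Theorems~\ref{th-kP0L-kPDT0L} and~\ref{th-kPDT0L-kP0L}.
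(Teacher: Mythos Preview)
Your proposal is not a proof but a research outline with an explicitly admitted gap: you identify the ``balancing act'' between membership in $\cL(\klPTOL)$ and the per-table coverage bound as ``the technical heart of the proof'' and then do not carry it out. No concrete language $L_m$ is ever defined, the increments $\delta_j$ are never specified, and the claimed counting argument that each table covers only boundedly many stages is asserted rather than demonstrated. In particular, for an infinite chain over a bounded alphabet to lie in $\cL(\klPTOL)$ at all, some table must realise an increment that is applicable infinitely often; reconciling this with ``$m$ pairwise incompatible stages'' is exactly the obstacle you name, and nothing in the proposal resolves it.

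The paper sidesteps this difficulty entirely by choosing a \emph{finite} language (apart from the axiom $c$):
\[
L_m=\{c\}\cup\Set{x_1\cdots x_{(k+1)m}}{x_i\in\{a,bb\},\ |w|_a\equiv 0\pmod{k+1}}.
\]
Since every non-$c$ word satisfies $|w|_a+\tfrac12|w|_b=(k+1)m$, the only admissible rules are $a\to a$, $a\to bb$, $b\to b$ and $c\to w$. The divisibility constraint $|w|_a\equiv 0\pmod{k+1}$ is the whole point: in any $k$-limited system, a table containing $a\to bb$ applied to $a^{(k+1)m}$ changes the $a$-count by at most $k$ and at least $1$ (if it is used at all), destroying divisibility by $k+1$. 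Hence every table must have $a\to a$ as its only $a$-rule, so every word of $L_m\setminus\{c\}$ must come directly from a $c$-rule, forcing $\Prod_{\klPTOL}(L_m)>m$. On the other side, the $(k+1)$-limited $\PDOL$ system with rules $a\to bb$, $b\to b$, $c\to a^{(k+1)m}$ generates $L_m$ with three rules, since now exactly $k+1$ occurrences of $a$ are rewritten in each step.

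The key idea you are missing is this $(k+1)$-versus-$k$ divisibility trick: rather than engineering distinct arithmetic increments along an infinite chain, one rigs a modular invariant that a limit of $k+1$ preserves automatically but a limit of $k$ cannot, and uses a finite language so that membership in $\cL(\klPTOL)$ is trivial. This removes the balancing problem altogether.
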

\begin{proof}
Let $k\geq 1$ and $V=\Sets{a,b,c}$. For $m\in\N$, consider the language
\begin{multline*}
L_m=\Sets{c}\cup\{\:w \;|\; w=x_1x_2\cdots x_{(k+1)m},\ x_i\in\{a,bb\}, 1\leq i\leq (k+1)m,\\
|w|_a=j(k+1), 0\leq j\leq m \:\}.
\end{multline*}

As shown in the proof of Theorem~\ref{th-kPDT0L-kP0L}, the only possible rules 
for $a$ and $b$ are $a\ra a$, $a\ra bb$, and~$b\ra b$.
The axiom of a minimal $\klPOL$ system $G_m$ is $c$ and there is a rule~$c\ra a^{(k+1)m}$. 
If $G_m$ contains the rule $a\ra bb$, then words are derived that do not belong to $L_m$ 
(e.\,g., words with exactly $k$ subwords $bb$). Hence, the only rule for
$a$ is $a\ra a$ in any table. Thus, all words of the set $L_m\setminus\{c\}$ have to be 
derived directly from $c$. This yields more than $m$ rules and more than $m^2(k+1)$ symbols.

However, a $(k+1)$-limited $\PDOL$ system 
$H_m$ with the rules $a\ra bb$, $b\ra b$, and $c\ra a^{(k+1)m}$ 
also generates~$L_m$ but needs only three rules and $(k+1)m+10$ symbols.
This proves $\klPTOL\ggg^{\Prod} \lPDOL$ and $\klPTOL\gg^{\Symb} \lPDOL$.
\end{proof}

From this result, we obtain the relations 
\[k\ell X\ggg^{\Prod} \ell X \qmand k\ell X\gg^{\Symb} \ell X\]
for all classes \hbox{$X\in\{\POL,\PDTOL,\PTOL\}$}.

Summarizing, we found and proved relations between all classes of limited propagating L-systems
that were left open or that have not been considered in~\cite{Har09}. In some cases, we could
improve the results in~\cite{Har09} regarding propagating systems. The relations we stated here
are all tight.

Limited $\mathrm{T0L}$ systems that are not necessarily propagating have also been studied in~\cite{Har09}.
Except only a few cases, all relations between the various classes have been found and have been
proven to be tight. For some of the open questions, we can adopt results from the propagating
case to the non-propagating case.

\bibliographystyle{eptcs} 
\bibliography{dc_lsys}

\end{document}